\hideLIPIcs\ccsdesc{}\global\renewcommand\ccsdesc[2][100]{}
\title{A Note on the Complexity of Defensive Domination}
\author{Steven Chaplick}
{Maastricht University}
{s.chaplick@maastrichtuniversity.nl}
{https://orcid.org/0000-0003-3501-4608}
{}
\author{Grzegorz Gutowski}
{Institute of Theoretical Computer Science, Faculty of Mathematics and Computer Science, Jagiellonian University, Krak{\'o}w, Poland}
{grzegorz.gutowski@uj.edu.pl}
{https://orcid.org/0000-0003-3313-1237}
{Partially supported by grant no.~2023/49/B/ST6/01738 from National Science Centre, Poland.}
\author{Tomasz Krawczyk}
{Faculty of Mathematics and Information Science, Warsaw University of Technology}
{tomasz.krawczyk@pw.edu.pl}
{https://orcid.org/0000-0002-8777-269X}
{Partially supported by grant no. 2024/53/B/ST6/02558  from National Science Centre, Poland.}
\authorrunning{S. Chaplick, G. Gutowski, T. Krawczyk}
\newtheorem{probspecint}{Problem}
\newcommand{\newprobspec}[3]{\begin{samepage}\begin{probspecint}{#1}\begin{itemize}[leftmargin=6em]
\item[$\mathsf{Input:}$] #2
\item[$\mathsf{Output:}$] #3
\end{itemize}\end{probspecint}\end{samepage}}
\let\leq\leqslant
\let\geq\geqslant
\let\setminus\smallsetminus
\let\rho\varrho
\let\implies\Rightarrow
\newcommand{\POL}{\ensuremath{\mathsf{P}}\xspace}
\newcommand{\NP}{\ensuremath{\mathsf{NP}}\xspace}
\newcommand{\CONP}{\ensuremath{\mathsf{co}\text{-}\mathsf{NP}}\xspace}
\newcommand{\FPT}{\ensuremath{\mathsf{FPT}}\xspace}
\newcommand{\W}[1]{\ensuremath{\mathsf{W[#1]}}\xspace}
\newcommand{\PHS}[1]{\ensuremath{\Sigma^\mathsf{P}_{#1}}\xspace}
\newcommand{\SPTWO}{\PHS{2}}
\newcommand{\Pdom}{\textsc{DominatingSet}\xspace}
\newcommand{\Pdefdom}{\textsc{DefensiveDominatingSet}\xspace}
\newcommand{\Pkdefdom}{$k$\textsc{-DefensiveDominatingSet}\xspace}
\newcommand{\Padefdom}{$\mathcal{A}$\textsc{-DefensiveDominatingMultiset}\xspace}
\newcommand{\Pmultidom}{\textsc{DefensiveDominatingMultiset}\xspace}
\newcommand{\Pcliquedel}{\textsc{CliqueNodeDeletion}\xspace}
\newcommand{\PSPTWO}{\textsc{Existential-}2\textsc{-Level-}3{-CNF}\xspace}
\newcommand{\PSPTWOSAT}{\textsc{Existential-}2\textsc{-Level-SAT}\xspace}
\newcommand{\Pcheckdom}{\textsc{GoodDefense}\xspace}
\newcommand{\Pcocheckdom}{\textsc{BadDefense}\xspace}
\newcommand{\Phallset}{\textsc{HallSet}\xspace}
\newcommand{\Pclique}{\textsc{Clique}\xspace}
\newcommand{\brac}[1]{\left(#1\right)}
\newcommand{\sbrac}[1]{\left[#1\right]}
\newcommand{\set}[1]{\left\{#1\right\}}
\newcommand{\norm}[1]{\left|#1\right|}
\newcommand{\Oh}[1]{O\brac{#1}}
\DeclareMathOperator{\setcount}{count}
\DeclareMathOperator{\ispan}{sum}
\DeclareMathOperator{\tspan}{span}
\DeclareMathOperator{\leftb}{left}
\DeclareMathOperator{\rightb}{right}
\begin{document}

\maketitle

\begin{abstract}
In a graph~$G$, a \emph{$k$-attack}~$A$ is any set of at most $k$~vertices and \emph{$\ell$-defense}~$D$ is a set of at most $\ell$~vertices.
We say that defense~$D$ \emph{counters} attack~$A$ if each $a \in A$ can be matched to a distinct \emph{defender} $d \in D$ with~$a$ equal to~$d$ or $a$~adjacent to~$d$ in~$G$.
In the \emph{defensive domination problem}, we are interested in deciding, for a graph~$G$ and positive integers~$k$ and~$\ell$ given on input, if there exists an~$\ell$-defense that counters every possible $k$-attack on~$G$.
Defensive domination is a natural resource allocation problem and can be used to model network robustness and security, disaster response strategies, and redundancy designs.

The defensive domination problem is naturally in the complexity class~$\SPTWO$.
The problem was known to be \NP-hard in general, and polynomial-time algorithms were found for some restricted graph classes.
In this note, we prove that the defensive domination problem is~\SPTWO-complete.

We also introduce a natural variant of the defensive domination problem in which the defense is allowed to be a multiset of vertices.
This variant is also~\SPTWO-complete, but we show that it admits a polynomial-time algorithm in the class of interval graphs.
A similar result was known for the original setting in the class of proper interval graphs.
\keywords{graph domination, computational complexity.}
\end{abstract}

\section{Introduction}

All graphs discussed in this paper are finite and simple. 
The vertex set and edge set of a graph~$G$ are denoted by~$V(G)$ and~$E(G)$.
For a subset~$U\subseteq V(G)$, $G[U]$ denotes the subgraph of~$G$ \emph{induced by~$U$}, 
and $G \setminus U$ denotes the subgraph~$G[V(G)\setminus U]$, which is shortened to~$G \setminus v$ when $U = \set{v}$.
The \emph{neighborhood} of a vertex~$v$, denoted by~$N_{G}(v)$, comprises of vertices adjacent to~$v$ and the \emph{closed neighborhood} of~$v$ is $N_{G}[v] = N_{G}(v) \cup \set{v}$.
The \emph{closed neighborhood} and the \emph{neighborhood} of a set~$U\subseteq V(G)$ of vertices are defined as~$N_{G}[U] = \bigcup_{v \in U} N_{G}[v]$ and~$N_{G}(U) =  N_{G}[U] \setminus U$, respectively.
The subscript~$G$ can be dropped if the graph is clear from the context.
A graph on $t$ vertices with all $t \choose 2$ edges is a \emph{clique} $K_t$.
When $G[U]$ is isomorphic to $K_t$ for some set $U$ of vertices, we say that $G$ contains $K_t$ as a subgraph.
The clique problem asks for the maximum $t$ such that $K_t$ is contained as a subgraph in a given graph.
A set~$D\subseteq V(G)$ is a \emph{dominating set} when $N_G[D] = V(G)$.
The graph domination problem asks for the minimum size of a dominating set in a given graph and is a classical problem in graph theory and combinatorial optimization.
We use the following specification of the problem.
\newprobspec{\Pdom}{A graph $G$ and a positive integer $\ell$}{$\mathsf{Yes}$ if and only if $G$ admits a dominating set of size at most $\ell$}
This problem has broad practical applications in resource allocation, network design, analysis and security.
It is also of theoretical interest, as it is one of the first problems known to be \NP-complete, see~\cite{GareyJ79}, and is used as a base for countless reductions.
Consult~\cite{HaynesHH20,HaynesHH21,HaynesHH23} for various versions and applications of the domination problem.

Graph domination can be understood through the analogy in which the vertices of a graph are under threat of some attack and \emph{defenders} need to be placed in the vertices so that each vertex either has a defender stationed directly in it or in an adjacent vertex.
This concept is useful in network security, facility location problems (positioning service centers), and disaster response strategies (deploying rescue teams).
Presented in this way, the \Pdom problem looks for a minimum number of defenders that can counter any attack on a single vertex. 

Farley and Proskurowski~\cite{FarleyP04} proposed the following extension of the problem, called \emph{defensive domination}, where we prepare for a simultaneous attack on at most $k$ vertices.
We say that any set~$A$ of at most $k$ distinct vertices in $G$ is a \emph{$k$-attack}.
An \emph{$\ell$-defense}~$D$ is a set of at most $\ell$~distinct vertices of~$G$ and corresponds to placing $\ell$ defenders, one in each vertex of~$D$.
We say that defense~$D$ \emph{counters} attack~$A$ if there is a matching between $A$ and $D$ such that each $a \in A$ is matched to a distinct defender $d \in D$ with~$a$ equal to~$d$ or $a$~adjacent to~$d$ in~$G$.
A defense~$D$ that counters every possible $k$-attack in~$G$ is called a \emph{$k$-defensive dominating set}.
This extension is natural and meets the redundancy requirements usual for all applications of the domination problem. 
We use the following formal specification of the problem. 
\newprobspec{\Pdefdom}{A graph $G$ and positive integers $k$ and $\ell$}{$\mathsf{Yes}$ if and only if $G$ admits an $\ell$-defense that counters every $k$-attack in $G$}
The parametrized version of the problem, where the size of the attack is an external parameter and not a part of the input is also of interest.
\newprobspec{\Pkdefdom}{A graph $G$ and a positive integer $\ell$}{$\mathsf{Yes}$ if and only if $G$ admits an $\ell$-defense that counters every $k$-attack in $G$}

Observe that \Pkdefdom for $k=1$ is exactly \Pdom.
Dereniowski, Gavenčiak, and Kratochvíl~\cite{DereniowskiGK19} proposed a further extension of the problem, which seems a bit technical, but was successfully applied to study a variant of the cops and robbers game.
The proposed extension of defensive domination allows the placement of multiple defenders in a single vertex of the graph and limits the possible attacks to the ones that are explicitly specified on the input.
A \emph{multiset $\ell$-defense}~$D$ places $\ell$ defenders in total at the vertices of~$G$, each vertex getting as many defenders as its multiplicity in $D$.
Multiset defense~$D$ counters attack~$A$ if each $a \in A$ can be matched to a distinct defender stationed in $a$, or any vertex adjacent to $a$.
The formal specification of the problem proposed by Dereniowski, Gavenčiak, and Kratochvíl~\cite{DereniowskiGK19} follows.
\newprobspec{\Padefdom}{A graph $G$, a set of attacks $\mathcal{A} \subseteq 2^{V(G)}$, multisets $D_1$ and $D_2$ of vertices of $G$, and a positive integer $\ell$}{$\mathsf{Yes}$ if and only if $G$ admits a multiset $\ell$-defense $D$ with $D_1 \subseteq D \subseteq D_2$ that counters every attack $A \in \mathcal{A}$}
Observe, that we do not allow for multiset attacks, as it would lead to a different problem.

We believe that allowing for multiset defenses is an interesting extension and allows for various applications.
We propose the following natural extension of the defensive domination problem, that allows for multiset defenses, but drops other technical conditions introdued by Dereniowski, Gavenčiak, and Kratochvíl.
\newprobspec{\Pmultidom}{A graph $G$ and positive integers $k$ and $\ell$}{$\mathsf{Yes}$ if and only if $G$ admits a multiset $\ell$-defense that counters every $k$-attack in $G$}
To exemplify the strength of this extension, note that $2$-defensive dominating set on $K_{1,t}$ has at least $t$ defenders, but it is enough to use $2$ defenders in the multiset setting.
In the proof of the main result of this paper, we focus on \Pdefdom, but the hardness also applies to \Pmultidom, which we believe should attract more attention.

As \Pdom is \NP-complete, it is straightforward that all mentioned domination problems are \NP-hard.
It is also easy to observe that \Pkdefdom and \Padefdom are in fact \NP-complete.
On the other hand, as \Pdefdom is naturally expressed as:
$$
    \exists_{D\subseteq V(G),\norm{D}\leq\ell}:\quad\forall_{A\subseteq V(G),\norm{A}\leq k}:\quad \text{$D$ counters $A$ in $G$}\text{,}
$$
we easily get that \Pdefdom is in the second level of polynomial hierarchy class \SPTWO.
Consult textbook by Arora and Barak~\cite[Chapter~5]{AroraB09} for an introduction of the polynomial hierarchy.
Schaefer and Umans~\cite{SchaeferU02_1,SchaeferU02_2,SchaeferU08} give an extensive list of complete problems for different classes in the polynomial hierarchy.
For a very brief introduction, \SPTWO is defined as $\NP^\NP$ -- a class of languages decidable in polynomial time by nondeterministic Turing machines with access to \NP-oracle,
where \NP-oracle allows to test any language in \NP in a single step of execution.
The canonical complete problem for \SPTWO is the following.
\newprobspec{\PSPTWOSAT}{Formula $\varphi(x_1,\ldots,x_a,y_1,\ldots,y_b)$ with variables in two disjoint sets $\set{x_1,\ldots,x_a}$ and $\set{y_1,\ldots,y_b}$}{$\mathsf{Yes}$ if and only if the following Boolean formula is true.
    $$
        \exists_{x_1,x_2,\ldots,x_a}:\quad \forall_{y_1,y_2,\ldots,y_b}:\quad \varphi(x_1,\ldots,x_a,y_1,\ldots,y_b)
    $$}
It was indpendently proved by Stockmeyer~\cite{Stockmeyer76} and Wrathall~\cite{Wrathall76} that the class \SPTWO is exactly the class of languages reducible to \PSPTWOSAT via polynomial-time many-one reductions.
Clearly, the following nondeterministic algorithm using \NP-oracle solves \Pdefdom: algorithm first guesses a set of defenders $D$ and then uses \NP-oracle to test whether there exists an attack of size at most $k$ not defended by $D$.
This fact makes \SPTWO a natural complexity class for \Pdefdom problem.

Ekim, Farley, and Proskurowski~\cite{EkimFP20} showed that \Pdefdom is unlikely to be in \NP.
The reason is that the following problem that corresponds to checking if a given defense $D$ counters every $k$-attack on~$G$ is already \CONP-complete.
\newprobspec{\Pcheckdom}{A graph $G$, a subset $D$ of vertices, and a positive integer $k$}{$\mathsf{Yes}$ if and only if defense $D$ counters every $k$-attack in $G$}

For any defense set, or multiset, $D$, and any set of vertices $X \subseteq V(G)$, let $\setcount_D(X)$ denote the number of elements (counting multiplicities for multisets) in $D \cap X$.
The following connection between defensive domination and Hall's condition was already observed in~\cite{EkimFP20}.
\begin{observation}[Ekim, Farley, Proskurowski~\cite{EkimFP20}]\label{obs:hall} The following conditions are equivalent:
\begin{itemize}
    \item Defense $D$ counters every $k$-attack in~$G$.
    \item For every $k$-attack $A$ we have $\norm{A} \leq \setcount_D(N[A])$.
\end{itemize}
\end{observation}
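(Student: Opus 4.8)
The plan is to view the statement as a direct instance of Hall's marriage theorem. For a fixed $k$-attack $A$, introduce the auxiliary bipartite graph $H_A$ whose two sides are, on one hand, the vertices of $A$, and on the other hand, the defenders of $D$ — in the multiset case each unit of multiplicity is taken as its own vertex of $H_A$. Put an edge between $a\in A$ and a defender $d$ exactly when $d$ is stationed at a vertex of $N[a]$. By the definition of ``counters'', $D$ counters $A$ if and only if $H_A$ admits a matching saturating $A$. The key numerical identity, immediate from the definitions, is that for any $A'\subseteq A$ the neighborhood of $A'$ in $H_A$ is precisely the set of defenders stationed in $N[A'] = \bigcup_{a\in A'}N[a]$, which has size (with multiplicities) $\setcount_D(N[A'])$.

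For the forward direction, assume $D$ counters every $k$-attack and fix a $k$-attack $A$. The saturating matching in $H_A$ injects $A$ into the set of defenders stationed in $N[A]$, so $\norm{A}\leq\setcount_D(N[A])$. For the reverse direction, assume $\norm{A}\leq\setcount_D(N[A])$ for every $k$-attack $A$, fix a $k$-attack $A$, and check Hall's condition for $H_A$: for every $A'\subseteq A$, the set $A'$ is a set of at most $\norm{A}\leq k$ distinct vertices, hence itself a $k$-attack, so the hypothesis gives $\norm{A'}\leq\setcount_D(N[A']) = \norm{N_{H_A}(A')}$. Thus $H_A$ has a matching saturating $A$, i.e.\ $D$ counters $A$.

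There is no real obstacle here; the proof is short once the right auxiliary graph is written down. The only two points that need a word of care are: (i) that every subset of a $k$-attack is again a $k$-attack, which is what lets the hypothesis of the reverse direction feed all the sets appearing in Hall's condition; and (ii) the multiset bookkeeping, i.e.\ consistently counting defenders with multiplicity so that $\norm{N_{H_A}(A')}=\setcount_D(N[A'])$. With these observations the argument is uniform for both the set-defense and multiset-defense settings, which is convenient since the same observation will be used for \Pdefdom and \Pmultidom alike.
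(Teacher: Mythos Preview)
Your proof is correct and is exactly the intended argument: the paper does not give a proof of this observation at all but simply attributes it to~\cite{EkimFP20}, explicitly framing it as ``the connection between defensive domination and Hall's condition''. Your reduction to Hall's marriage theorem via the auxiliary bipartite graph $H_A$, together with the remark that subsets of $k$-attacks are again $k$-attacks, is precisely the standard justification and matches the spirit of the cited source.
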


This draws our attention to the complementary problem of \Pcheckdom and a very similar problem that is known to be \NP-complete and \W{1}-hard.

\newprobspec{\Pcocheckdom}{A graph $G$, a subset $D$ of vertices, and a positive integer $k$}{$\mathsf{Yes}$ if there exists $k$-attack $A$ with $\norm{A} > \setcount_D(N[A])$}
\newprobspec{\Phallset}{A bipartite graph $G$ with bipartition classes $U$ and $W$ and a positive integer $k$}{$\mathsf{Yes}$ if and only if there exists $X \subseteq U$ with $\norm{N_G(X)} < \norm{X}\leq k$}

You can find a parametrized reduction from \Pclique to \Phallset in~\cite[Exercise~13.28]{CyganFKLMPPS15}.
\Phallset has an easy parametrized reduction to \Pcocheckdom.
All of these observations allow for the following conclusion.
\begin{lemma}[Theorem~2.3 in Ekim, Farley, Proskurowski~\cite{EkimFP20}]\label{lem:w1}
\Pcheckdom is \CONP-complete. \Pcocheckdom is \NP-complete and \W{1}-hard when parametrized by $k$.
\end{lemma}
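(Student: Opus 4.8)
The plan is to prove the statement by a short reduction chain---from \Pclique{} through \Phallset{} to \Pcocheckdom---together with the elementary observations that place the two decision problems in \NP{} and \CONP, respectively. The first reduction I would take off the shelf from~\cite[Exercise~13.28]{CyganFKLMPPS15}; the second one is the part I would actually build.

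I would first settle membership. For \Pcocheckdom, a nondeterministic machine guesses a set $A \subseteq V(G)$ with $\norm{A} \leq k$ and verifies $\norm{A} > \setcount_D(N[A])$ in polynomial time, so \Pcocheckdom is in \NP. By Observation~\ref{obs:hall}, a defense $D$ counters every $k$-attack in $G$ precisely when no such $A$ exists; thus \Pcheckdom is, instance for instance, the complement of \Pcocheckdom, and hence lies in \CONP.

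For the hardness direction I would give the reduction from \Phallset{} to \Pcocheckdom as follows. Given a bipartite graph $G$ with parts $U$ and $W$ and a bound $k$, build $G'$ by attaching to every $w \in W$ a set $L_w$ of $k$ fresh pendant vertices, set $D = W \cup \bigcup_{w \in W} L_w$, and keep the bound $k$. The pendants force $\norm{N_{G'}[w] \cap D} \geq k+1$ for every $w \in W$, so no bad $k$-attack can include a vertex of $W$; and since every pendant $\ell \in L_w$ has $\{\ell, w\} \subseteq N_{G'}[\ell] \cap D$, a short calculation shows that replacing an arbitrary bad $k$-attack by its intersection with $U$ keeps it bad. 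Hence it suffices to consider attacks $A \subseteq U$, for which $\setcount_D(N_{G'}[A]) = \norm{N_G(A)}$, so $G'$ admits a bad $k$-attack if and only if $G$ has a set $X \subseteq U$ with $\norm{N_G(X)} < \norm{X} \leq k$. The construction is polynomial in $\norm{G}$ and $k$ and leaves the parameter $k$ unchanged, so it is at the same time a polynomial-time reduction and a parametrized reduction. The one step I expect to require care is exactly this ``purification'' argument: intuitively a defender that is already reachable from the attacked $U$-vertices contributes nothing new to $\setcount_D(N_{G'}[A])$ while the extra attacker still counts, but this has to be checked against the case where attackers are placed on pendants to cover their own parents.

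Finally I would assemble the pieces. The reduction of~\cite[Exercise~13.28]{CyganFKLMPPS15} is a parametrized reduction from \Pclique{} (parametrized by clique size) to \Phallset{}, and it is also a polynomial-time reduction; since \Pclique{} is \NP-hard~\cite{GareyJ79} and \W{1}-hard~\cite{CyganFKLMPPS15}, composing it with the reduction above shows that \Pcocheckdom is \NP-hard and \W{1}-hard parametrized by $k$. Together with membership in \NP this yields \NP-completeness of \Pcocheckdom. Dually, \Pcheckdom inherits \CONP-hardness, and being in \CONP it is \CONP-complete.
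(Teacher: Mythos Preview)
Your proposal is correct and follows exactly the approach the paper sketches: membership is immediate, hardness comes from composing the cited reduction \Pclique $\to$ \Phallset with an ``easy'' parametrized reduction \Phallset $\to$ \Pcocheckdom, and \Pcheckdom inherits \CONP-completeness by complementation. The paper does not spell out the second reduction, and your pendant construction (forcing every $w\in W$ to see $k+1$ defenders so that bad attacks can be purified to subsets of $U$) is a clean and correct way to realize it; the purification inequality you flag as the delicate step indeed goes through, since removing a pendant $\ell\in L_w$ from a bad attack drops $\norm{A}$ by one but drops $\setcount_D(N_{G'}[A])$ by at least one (the defender at $\ell$ disappears, and $w$ may or may not survive).
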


When $k$ is an external parameter of the problem, \Pkdefdom is in \NP, and it remains \NP-complete even when the input graph is restricted to split graphs~\cite{EkimFP20}, or bipartite graphs~\cite{HenningPT25}.
On the positive side, \Pdefdom admits polynomial-time algorithms when the input graph is restricted to cliques, cycles, trees~\cite{FarleyP04}, co-chain graphs, threshold graphs~\cite{EkimFP20}, or proper interval graphs~\cite{EkimFPS23}.

The main result of this paper is the following.
\begin{restatable}{theorem}{thmmain}\label{thm:main}
\Pdefdom and \Pmultidom are \SPTWO-complete.
\end{restatable}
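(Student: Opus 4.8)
\emph{Membership.} The \SPTWO upper bound is the one sketched in the introduction: a nondeterministic polynomial-time machine guesses the defense $D$ (a set for \Pdefdom, a multiset of total size at most $\ell$ for \Pmultidom) and makes one query to an \NP-oracle that solves \Pcocheckdom on $(G,D,k)$, i.e.\ searches for a $k$-attack $A$ with $\norm{A}>\setcount_D(N[A])$; by Observation~\ref{obs:hall} it accepts exactly when no such attack is found. Since an attack is a polynomial-size certificate, this query is in \NP (Lemma~\ref{lem:w1}, and the same bound holds for multiset $D$), so both problems lie in $\NP^\NP$, i.e.\ in \SPTWO. Thus only \SPTWO-hardness remains, and I plan to prove it for \Pdefdom in a way that the very same instance also serves for \Pmultidom.

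\emph{Source problem and overall shape.} I would reduce from \Pcliquedel, which is \SPTWO-complete; a standard reduction from \PSPTWOSAT proves this, encoding an instance $\exists\vec x\,\forall\vec y\,\varphi$ (with $\varphi$ in 3-DNF) by running the textbook reduction of 3-SAT to \Pclique on the 3-CNF $\neg\varphi$ and adding, per variable $x_i$, a ``token pair'' together with one large common ``forcing clique'', so that every deletion set of the allowed size deletes exactly one token per pair and hence encodes an assignment $\hat{\vec x}$ for which the graph becomes $K_t$-free after the deletion iff $\forall\vec y\,\varphi(\hat{\vec x},\vec y)$. Given a \Pcliquedel instance $(H,s,t)$ --- with $H$ padded so that $t$ is large --- the aim is a triple $(G,k,\ell)$ with the property that $G$ admits an $\ell$-defense countering every $k$-attack iff some $S\subseteq V(H)$ with $\norm{S}\le s$ makes $H-S$ have no $K_t$. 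The plan is to build $G$ on top of the chain \Pclique~$\to$~\Phallset~$\to$~\Pcocheckdom already used in the excerpt for \CONP-hardness of \Pcheckdom, adding one quantifier layer: a mandatory part $M$ that every valid defense must contain in full (the mandatory part is rigged with private leaves so that each of its vertices is forced into $D$ by a short attack on it and an attached leaf), together with a budget-limited selection part by which $D$ may pick, for at most $s$ vertices $v$ of $H$, a block that neutralizes precisely the clique-detecting attacks passing through $v$. Via Observation~\ref{obs:hall}, ``$D$ counters every $k$-attack'' then reads ``no clique-detecting attack survives'', i.e.\ ``$H-S$ has no $K_t$''.

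\emph{Gadgets and the checks.} The clique-detecting core is the \Pclique~$\to$~\Phallset reduction of \cite[Exercise~13.28]{CyganFKLMPPS15} composed with \Phallset~$\to$~\Pcocheckdom, so that a $k$-attack is a Hall violator exactly when it encodes a $K_t$ of $H$. To it I would attach, for each $v\in V(H)$, a selection block $B_v$ (all blocks of equal size), disjoint from $M$ and calibrated so that: (i) putting $B_v$ \emph{entirely} into $D$ raises $\setcount_D(N[A])$ enough to defeat every clique-detecting attack through $v$; (ii) any proper subset of $B_v$ in $D$ is useless; (iii) $\ell=\norm{M}+s\norm{B_v}$ lets $D$ afford at most $s$ complete blocks beyond $M$. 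The work-heavy part is the verification: every in-budget defense is $M$ plus at most $s$ complete blocks (hence encodes a legal $S$); ill-formed attacks (too small, too large, touching pendant/block/mandatory vertices directly, or whose ``edge part'' is not a clique) are never Hall violators; and a Hall violator of size $\le k$ genuinely encodes a $K_t$ of $H-S$, re-using and staying compatible with the counting behind \cite[Exercise~13.28]{CyganFKLMPPS15}. Granting these, one obtains exactly: $G$ has an $\ell$-defense countering all $k$-attacks $\iff$ some $S$ with $\norm{S}\le s$ makes $H-S$ $K_t$-free $\iff$ $(H,s,t)$ is a \yes-instance of \Pcliquedel.

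\emph{Multiset version, and the main obstacle.} For \Pmultidom the same $(G,k,\ell)$ should work: stacking several defenders on one vertex never helps, since one token already realizes that vertex's whole influence on any attack localized at it, while the only globally useful resource --- the selection blocks --- acts only through its many distinct vertices; to make this watertight I would hang on every vertex a private pendant clique large enough that the unique non-wasteful home for a surplus token is an inert private vertex, so an optimal multiset defense may be assumed to be a set and the \Pdefdom analysis applies verbatim. I expect the main obstacle to be property~(ii) together with the two forcing claims: making the selection blocks \emph{strictly} all-or-nothing --- so the defender cannot cheat by spreading the budget thinly across many blocks and partially neutralizing every clique at once --- while simultaneously keeping~(i) true and preventing the adaptive attacker from profitably mixing attack types, all of which must be achieved with a single pair $(k,\ell)$. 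The bulk of the proof is therefore a careful, interlocking choice of gadget sizes.
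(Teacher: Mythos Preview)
Your membership argument and choice of source problem match the paper exactly: both reduce from \Pcliquedel. The high-level shape you sketch --- a clique-detecting Hall-violator core plus a budget-limited ``selection'' layer encoding the deletion set --- is also the paper's shape. Where you diverge is in the selection mechanism, and this is where your acknowledged obstacle lives.

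You propose per-vertex \emph{blocks} $B_v$ with an all-or-nothing property~(ii). That property is genuinely hard to enforce: if each vertex of $B_v$ contributes additively to $\setcount_D(N[A])$ for attacks through $v$, then half a block gives half the benefit, and a defender can smear the budget over $2s$ half-blocks to partially neutralize every clique. You flag this as the main obstacle but give no construction that resolves it. The paper sidesteps the issue entirely by making the selection unit a \emph{single vertex}: for every $v\in V(H)$ it introduces twins $v',v''$ with identical closed neighborhoods (both see the edge-vertices $e'$ for edges at $v$, the private bipartite gadget $I_v$, and the common forcing sets). The vertex $v'$ is mandatory; the choice is only whether to add $v''$. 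Since the unit is one vertex, there is nothing to fractionally split, and your property~(ii) becomes vacuous. The budget is then pinned not by pendant leaves but by a large independent set $I_3$ adjacent to all of $V=\{v',v''\}$, which forces at least $n+s$ defenders into $V$; together with analogous forcing sets for three auxiliary cliques and a tight global count $\ell$, this leaves room for exactly $s$ of the $v''$. The ``serious'' attacks turn out to be exactly $I_2$ together with $\binom{t}{2}$ edge-vertices, and such an attack is a Hall violator iff those edges form a $K_t$ avoiding the chosen $s$ vertices.

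The twin construction also explains why the \emph{same} instance works for \Pmultidom with no extra pendant cliques: since $v'$ and $v''$ have the same closed neighborhood, two defenders stacked on $v'$ are indistinguishable from one on $v'$ and one on $v''$, so an optimal multiset defense can be taken to be the set defense analyzed above. Your proposed fix (private pendant cliques to absorb surplus tokens) is unnecessary once the selection unit is a single twin. In short: your plan is on the right track, but the block mechanism creates the very difficulty you worry about; replacing blocks by identical-neighborhood twins, as the paper does, removes it and yields a concrete construction whose correctness is a counting argument of a page or two.
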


The introduced multiset setting not only may better fit some applications, but might also be more approachable algorithmically.
For example, in \cref{sec:interval} we investigate the multiset defensive domination problem on the class of interval graphs.
A graph $G$ is an \emph{interval graph} when each vertex $v \in V(G)$ corresponds to a closed interval $I_v \subseteq \mathbb{R}$, and $\set{u,v} \in E(G)$ if and only if $I_u \cap I_v \neq \emptyset$.
We prove the following.
\begin{restatable}{theorem}{thminterval}\label{thm:interval}
\Pmultidom is in $\POL$ when the input graphs are restricted to the class of interval graphs.
\end{restatable}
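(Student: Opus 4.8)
The plan is to combine \cref{obs:hall} with the path‑like structure of interval graphs to reduce the question to a small system of difference constraints on the prefix sums of an optimal defense, and then solve that system by a longest‑path computation.

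\smallskip
\noindent\emph{Setup.} An attack inside one connected component can be countered only by defenders in that component, and the attack budget is shared, so a multiset defense is valid iff its restriction to each component $H$ counters every $\min(k,|V(H)|)$-attack of $H$; hence we may assume $G$ is connected and $k\le|V(G)|$. Let $C_1,\dots,C_m$ ($m\le|V(G)|$) be the clique‑path of $G$ (the linear arrangement of its maximal cliques), so each vertex $v$ occupies a contiguous interval $[\alpha_v,\beta_v]$ of clique‑indices and two vertices are adjacent iff these intervals intersect. For a vertex set $A$, $N[A]$ depends only on the union $U$ of the intervals of the vertices of $A$: it equals $N_U:=\{v:[\alpha_v,\beta_v]\cap U\neq\emptyset\}$. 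Writing $a(U):=|\{v:[\alpha_v,\beta_v]\subseteq U\}|$, \cref{obs:hall} becomes: a defense $D$ is valid iff $\setcount_D(N_U)\ge\min(k,a(U))$ for every \emph{attack region} $U$, i.e.\ every union of intervals of $\{1,\dots,m\}$. Since moving a defender onto a vertex whose interval strictly contains its own can only enlarge the sets $N_U$ in which it is counted, we may assume all defenders lie on vertices realizing the inclusion‑maximal intervals $M_1,\dots,M_t$; these can be listed so that both their endpoints strictly increase. Encode a defense by its prefix sums $0=P_0\le P_1\le\dots\le P_t=\ell$, where $P_i-P_{i-1}$ is the number of defenders on $M_i$.

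\smallskip
\noindent\emph{Reduction to single‑interval attacks --- the main step.} I claim $D$ is valid as soon as $\setcount_D(N_{[x,y]})\ge\min(k,a([x,y]))$ for every single interval $[x,y]$. I would prove this by induction on the number $c$ of components (blocks) of an attack region $U=B_1\sqcup\dots\sqcup B_c$; $c=1$ is the hypothesis. If some defender is stranded inside an interior gap $h$ of $U$ (lies on an $M_i\subseteq h$), then because the $M_i$ are inclusion‑maximal, a maximal interval meeting both $U^{L}:=B_1\sqcup\dots\sqcup B_j$ and $U^{R}:=B_{j+1}\sqcup\dots\sqcup B_c$ (where $h$ lies between $B_j$ and $B_{j+1}$) would span $B_j$ through $B_{j+1}$, hence strictly contain $h$ and thus strictly contain $M_i$, which is impossible; so $N_{U^{L}}$ and $N_{U^{R}}$ share no defender, and by the inductive hypothesis (applicable since $U^{L},U^{R}$ have fewer blocks) together with $\min(k,p)+\min(k,q)\ge\min(k,p+q)$,
\[ \setcount_D(N_U)=\setcount_D(N_{U^{L}})+\setcount_D(N_{U^{R}})\ge\min(k,a(U^{L}))+\min(k,a(U^{R}))\ge\min(k,a(U)). \]
Otherwise no interior gap contains a defender, so $\setcount_D(N_U)=\setcount_D(N_{\mathrm{conv}(U)})\ge\min(k,a(\mathrm{conv}(U)))\ge\min(k,a(U))$, with $\mathrm{conv}(U)=[\min B_1,\max B_c]$ a single interval. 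This induction is the delicate part: one must deal with empty or degenerate gaps and distinguish carefully between a vertex and a vertex that actually carries a defender.

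\smallskip
\noindent\emph{Solving the constraint system.} For a single interval $[x,y]$, the maximal intervals meeting it form a contiguous block $\{q,\dots,p\}$ of the staircase, so the constraint reads $P_p-P_{q-1}\ge\min(k,a([x,y]))$; there are only $O(m^2)$ of these. Together with $P_0=0$ and $P_{i-1}\le P_i$, they form a difference‑constraint system on an acyclic digraph whose minimum feasible value of $P_t$ is the length of a longest weighted path, computable in polynomial time. Putting $\min(k,|V(G)|)$ defenders on every $M_i$ is always valid, so this minimum is at most $|V(G)|^2$; since adding defenders never invalidates a defense, the instance is a yes‑instance iff this minimum is at most the input $\ell$. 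Everything runs in time polynomial in $|V(G)|$, which proves \cref{thm:interval}. I expect the inductive claim above to be the real obstacle; the rest is routine bookkeeping about interval representations.
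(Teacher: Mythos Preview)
Your argument is correct and takes a genuinely different route from the paper. Both start by moving all defenders onto inclusion-maximal intervals (the paper's \cref{obs:proper}), but then the paper singles out a polynomial family of \emph{block attacks} $B_{x,m}$ (for each right endpoint $x$, the $m$ vertices of $V_x$ with rightmost left endpoints), proves in \cref{lem:block} that a proper defense countering every $B_{x,m}$ with $m\le k$ already counters every $k$-attack, and then runs a left-to-right greedy: at each right endpoint add as many copies as needed of the maximal interval reaching furthest right, with optimality established by an exchange argument on the sorted sequence of defenders. You instead pass to the clique path, reduce arbitrary attack regions to single-interval regions by your two-case induction on the number of blocks (the case split on whether a gap contains a stranded maximal defender is exactly what makes the proper-defense assumption do its work), and cast the surviving constraints as difference inequalities $P_p-P_{q-1}\ge w$ on prefix sums, so that the minimum defense size becomes a longest-path value in a DAG. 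The paper's greedy is more explicit and faster (the claimed $O(nk)$ bound); your formulation is more structural, gives a clean polyhedral description of all valid prefix-sum vectors, and your key lemma that single intervals suffice looks easier to reuse, for instance for cost-weighted defenders or for graphs presented by a path decomposition.
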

A similar result for \Pdefdom was shown for proper interval graphs by Ekim, Farley, Proskurowski, and Shalom~\cite{EkimFPS23}.
The complexity of \Pdefdom for interval graphs remains unknown.

The proof of the main theorem is based on a reduction of the following problem, which was shown to be \SPTWO-complete by Rutenburg~\cite{Rutenburg86}.
\newprobspec{\Pcliquedel}{A graph $G$ and positive integers $s$ and $t$}{$\mathsf{Yes}$ if and only if $G$ admits a set $X$ of at most $s$ vertices such that $G\setminus X$ does not contain $K_t$ as a subgraph}
\begin{restatable}[Theorem~6 in Rutenburg~\cite{Rutenburg86}]{theorem}{thmrutenburg}\label{thm:rutenburg}
\Pcliquedel is \SPTWO-complete.
\end{restatable}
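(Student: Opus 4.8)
The plan is to reduce from the canonical \SPTWO-complete problem \PSPTWOSAT; membership of \Pcliquedel in \SPTWO needs no work, since $(G,s,t)$ is a yes-instance iff ``there is a set $X$ with $|X|\le s$ such that every $t$-element subset of $V(G)\setminus X$ fails to span a clique'', and the inner statement is a \CONP predicate of $(G,X,t)$. For the hardness reduction I would first massage a \PSPTWOSAT instance $\exists\bar x\,\forall\bar y\,\varphi(\bar x,\bar y)$ into a clique-friendly normal form. Using $\exists\bar x\,\forall\bar y\,\varphi\equiv\exists\bar x\,\lnot(\exists\bar y\,\lnot\varphi)$ and a Tseitin transformation of $\lnot\varphi$ (with the fresh gate variables placed on the $\bar y$-side, plus a unit clause asserting the output gate), I obtain a $3$-CNF $\psi$ whose variables split into ``$x$-variables'' and ``$y$-variables'' so that the instance is a yes-instance iff there is an assignment to the $x$-variables under which $\psi$ is unsatisfiable; moreover every clause of $\psi$ then contains a $y$-variable, and a careful occurrence reduction keeps the number of occurrences of each $x$-variable bounded. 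I would also record the elementary monotonicity fact used repeatedly below: if some partial assignment of the $x$-variables makes $\psi$ unsatisfiable, so does every total extension of it, so a partial witness suffices.

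Next, the construction. I start from the classical Karp reduction of $3$-SAT to \Pclique applied to $\psi$: a vertex $(C,\ell)$ for every clause $C$ of $\psi$ and every literal $\ell$ of $C$, with $(C,\ell)$ and $(C',\ell')$ adjacent iff $C\ne C'$ and $\ell,\ell'$ are not complementary, so that the cliques of size $q:=|\psi|$ are exactly the satisfying assignments of $\psi$. Then I make two changes. (i) Blow up each vertex $(C,\ell)$ into a clique: those with $\ell$ over an $x$-variable get a common size $N$, those with $\ell$ over a $y$-variable get size $N+s$ where $s$ is the budget fixed below; same-clause gadgets remain non-adjacent, different-clause gadgets remain completely joined exactly when their literals are compatible. (ii) Set $t:=qN$ (choosing $N$ large enough that a $K_t$ is forced to consist of $N$ vertices drawn from one gadget per clause) and set $s$ to roughly the number of $x$-literal occurrences of $\psi$. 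With $N$ large, a $K_t$ in $G$ exists iff $\psi$ is satisfiable; and since a deletion of at most $s$ vertices leaves every $y$-side gadget with at least $N$ survivors, exactly one deletion is needed to ``disable'' an $x$-side gadget, whereas $y$-side gadgets can never be disabled within budget. The intended semantics is that deleting the gadget of an $x$-literal $\ell$ inside clause $C$ forbids $C$ from being satisfied through $\ell$; thus a deletion set encodes a partial assignment of the $x$-variables, and for a chosen assignment $\bar\alpha$ the deletion of all gadgets of $\bar\alpha$-falsified $x$-literals leaves a graph with no $K_t$ precisely when $\psi$ restricted by $\bar\alpha$ is unsatisfiable. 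The forward direction of correctness then follows from a short argument combining this description with the monotonicity fact.

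The crux — and the step I expect to be the main obstacle — is the converse. A deletion set $X$ with $|X|\le s$ for which $G\setminus X$ is $K_t$-free need not have the clean ``single partial assignment'' shape: because the $y$-side gadgets are unbreakable it can only puncture $x$-literal gadgets, but it might puncture several gadgets inside one clause, or puncture both a positive and a negative occurrence of the same $x$-variable. Such ``incoherent'' deletions are strictly more restrictive than any honest assignment, and in a naive construction they could destroy all $t$-cliques for spurious reasons, making the instance a false positive. The heart of the proof is therefore a normalization lemma: within the budget no such deletion helps, because any family of punctures $X$ realizes is dominated by a genuine partial $x$-assignment $\bar\alpha$, in the sense that unsatisfiability of the punctured formula forces unsatisfiability of $\psi|_{\bar\alpha}$ — again leaning on monotonicity of restriction. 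Making this lemma hold is exactly what pins down the chosen normal form for $\psi$ (every clause touching a $y$-variable, controlled occurrences) and the precise values of $s$ and $t$; once it is available, the equivalence ``\Pcliquedel yes-instance $\iff$ \PSPTWOSAT yes-instance'' is immediate, and everything outside this lemma is routine gadgeteering.
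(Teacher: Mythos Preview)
Your sketch correctly identifies where the difficulty lies, but the normalization lemma you hope for is simply false for the construction you describe, and no amount of tweaking the normal form of $\psi$ or the budget $s$ will repair it. Here is a concrete obstruction. Take $\psi$ over $x_1,x_2,y$ with clauses
\[
C_1=(x_1\vee\neg x_2\vee y),\quad C_2=(\neg x_1\vee x_2\vee y),\quad C_3=(x_1\vee x_2\vee\neg y),\quad C_4=(\neg x_1\vee\neg x_2\vee\neg y).
\]
Every clause touches $y$, and each $x_i$ has two positive and two negative occurrences, so the ``honest'' budget is $s=4$. One checks that for every assignment $\alpha$ to $(x_1,x_2)$ the restricted formula $\psi|_\alpha$ is satisfiable, so the \PSPTWOSAT answer is \textsf{No}. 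Now, in your graph, disable the four $x_1$-gadgets $(C_1,x_1),(C_2,\neg x_1),(C_3,x_1),(C_4,\neg x_1)$. This costs exactly $4$. The surviving literal choices per clause are $\{\neg x_2,y\},\{x_2,y\},\{x_2,\neg y\},\{\neg x_2,\neg y\}$; a direct case check over the four assignments to $(x_2,y)$ shows that no mutually compatible transversal exists, so $G\setminus X$ has no $K_t$. Hence your reduction outputs \textsf{Yes} on a \textsf{No}-instance. The point is that a purely numeric budget cannot prevent spending the entire $x_1$-share on both polarities of $x_1$ while spending nothing on $x_2$; this is strictly more powerful than any single $x$-assignment, and your monotonicity observation runs in the wrong direction to recover from it.

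The paper's proof avoids this by \emph{structurally} forcing coherence rather than relying on the budget alone. For each existential variable $x_i$ it introduces a copy of $K_{c,c}$ (with $c$ the number of clauses) and glues a private $K_t$ onto every edge of that $K_{c,c}$; destroying all these cliques requires taking a full bipartition side, which is exactly choosing a truth value for $x_i$. An analogous $K_{3,3}$ gadget at each clause forces the deletion to decide whether the clause is satisfied by an $x$-literal or not. Only after these consistency gadgets pin down a genuine assignment does the budget become tight, and then the remaining $K_t$'s encode satisfiability of $\psi|_\alpha$ via the $y$-variables. What you are missing is precisely this layer of gadgets; once you add something playing that role, the converse direction becomes a short counting argument instead of an unprovable lemma.
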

As the original paper includes only an idea of the proof that requires some minor alterations, we have decided to present a streamlined proof of \cref{thm:rutenburg} in Appendix~\ref{sec:clique}.

\section{Main Result}

We are ready for the proof of the main result.
\thmmain*
\begin{proof}
We present a reduction from \Pcliquedel.
Assume that we are given an instance $G,s,t$, where $s$ is the number of vertices to remove from $G$ and $t$ is the size of clique to avoid as a subgraph.
For technical reasons, we assume that $t \geq 4$.
Let $n =\norm{V(G)}$ denote the number of vertices in $G$.

We construct an equivalent instance $G',k,\ell$ of \Pdefdom.
We set the maximum size of an attack $k=n+s$, the maximum size of a defense $\ell=4(n+s)+ nt - (t-1)$, and construct the graph $G'$ as depicted in Figure~\ref{fig:reduction}:
\begin{itemize}
\item For each vertex $v \in V(G)$, we introduce two vertices $v'$ and $v''$ representing $v$ in $G'$.
Set $V$ denotes vertices $v'$, $v''$ introduced for all $v \in V(G)$.
\item For each edge $e=(u,v)$ in $E(G)$, we introduce the vertex $e'$ and add the edges joining $e'$ with four vertices $u'$, $u''$, $v'$, $v''$ in $V$.
Set $E$ denotes vertices $e'$ introduced for all $e \in E(G)$.
\item We introduce four independent sets: $I_1$ of size $n+s$; $I_2$ of size $n+s-{t \choose 2}$; $I_3$ of size $n+s+\ell$; and $I_4$ of size $n+s$.
\item We introduce three cliques: $Q_1$ of size $n+s$; $Q_2$ of size $n+s-(t+1)$; and $Q_4$ of size $n+s$.
\item For each vertex $v \in V(G)$, we introduce a complete bipartite graph with one bipartition class $I_v$ of size ${t \choose 2}$,
and the other class $I'_v$ of size $t$.
Set $I_V$ denotes the sum of sets $I_v$ for all $v \in V(G)$ and $I'_V$ denotes the sum of sets $I'_v$ for all $v \in V(G)$.
\item We add edges of complete bipartite graphs given by the biparition classes: $(I_1,Q_1)$; $(Q_2,Q_1\cup I_2\cup E\cup I_V)$; $(V,Q_4\cup I_3)$; $(Q_4,I_4)$; and $(\set{v',v''},I_v)$ for each $v\in V(G)$.
\end{itemize}

\tikzstyle{cross}  = [{path picture={ 
		\draw[black]
		(path picture bounding box.south east) -- (path picture bounding box.north west) (path picture bounding box.south west) -- (path picture bounding box.north east);
	}}]

\begin{figure}[htp!]
\begin{center}

\begin{tikzpicture}[scale=0.5]
			
\begin{scope}[shift={(0,3.75)}]
	\coordinate (K1_center) at (0,0) {};
	\coordinate (K1_label) at ($(K1_center)+(225:2)$) {};
	\coordinate (K11) at ($(K1_center)+(0:1)$) {};
	\coordinate (K12) at ($(K1_center)+(72:1)$) {};
	\coordinate (K13) at ($(K1_center)+(144:1)$) {};
	\coordinate (K14) at ($(K1_center)+(216:1)$) {};
	\coordinate (K15) at ($(K1_center)+(288:1)$) {};

	\coordinate (K1B1) at ($(K1_center)+(240:1.6)$) {};
	\coordinate (K1B2) at ($(K1_center)+(270:1.6)$) {};
	\coordinate (K1B3) at ($(K1_center)+(300:1.6)$) {};

	\coordinate (K1U1) at ($(K1_center)+(60:1.6)$) {};
	\coordinate (K1U2) at ($(K1_center)+(90:1.6)$) {};
	\coordinate (K1U3) at ($(K1_center)+(120:1.6)$) {};

	\draw [thick, decorate, decoration = {brace}] (-1.9,-1.2) --  (-1.9,1.2);

	\coordinate (K1_br_lab) at (-3,0) {};
\end{scope}

\begin{scope}[shift={(0,-1)}]
		\coordinate (K2_center) at (0,0) {};
		\coordinate (K2_label) at ($(K2_center)+(225:2)$) {};
		\coordinate (K21) at ($(K2_center)+(0:1)$) {};
		\coordinate (K22) at ($(K2_center)+(72:1)$) {};
		\coordinate (K23) at ($(K2_center)+(144:1)$) {};
		\coordinate (K24) at ($(K2_center)+(216:1)$) {};
		\coordinate (K25) at ($(K2_center)+(288:1)$) {};
		
		\coordinate (K2U1) at ($(K2_center)+(60:1.6)$) {};
		\coordinate (K2U2) at ($(K2_center)+(90:1.6)$) {};
		\coordinate (K2U3) at ($(K2_center)+(120:1.6)$) {};

		\coordinate (K2L1) at ($(K2_center)+(150:1.6)$) {};
		\coordinate (K2L2) at ($(K2_center)+(180:1.6)$) {};
		\coordinate (K2L3) at ($(K2_center)+(210:1.6)$) {};

		\coordinate (K2R1) at ($(K2_center)+(-30:1.6)$) {};
		\coordinate (K2R2) at ($(K2_center)+(0:1.6)$) {};
		\coordinate (K2R3) at ($(K2_center)+(30:1.6)$) {};

        \draw [thick, decorate, decoration = {brace}] (1.2,-1.9) --  (-1.2,-1.9);
		
		\coordinate (K2_br_lab) at (0,-2.5) {};
\end{scope}

\begin{scope}[shift={(12,-1)}]
	\coordinate (K4_center) at (0,0) {};
	\coordinate (K4_label) at ($(K4_center)+(225:2)$) {};
	\coordinate (K41) at ($(K4_center)+(0:1)$) {};
	\coordinate (K42) at ($(K4_center)+(72:1)$) {};
	\coordinate (K43) at ($(K4_center)+(144:1)$) {};
	\coordinate (K44) at ($(K4_center)+(216:1)$) {};
	\coordinate (K45) at ($(K4_center)+(288:1)$) {};
	
	\coordinate (K4L1) at ($(K4_center)+(120:1.6)$) {};
	\coordinate (K4L2) at ($(K4_center)+(150:1.6)$) {};
	\coordinate (K4L3) at ($(K4_center)+(180:1.6)$) {};
	\coordinate (K4L4) at ($(K4_center)+(210:1.6)$) {};
	
	\coordinate (K4R1) at ($(K4_center)+(30:1.6)$) {};
	\coordinate (K4R2) at ($(K4_center)+(0:1.6)$) {};
	\coordinate (K4R3) at ($(K4_center)+(330:1.6)$) {};
		
	\draw [thick, decorate, decoration = {brace}] (1.2,-1.9) --  (-1.2,-1.9);
	
	\coordinate (K4_br_lab) at (0,-2.5) {};

\end{scope}

\begin{scope}[shift={(-0.5,7.5)}]
	\draw[] (-2.5,-0.5) rectangle (2.5,0.5);
	\coordinate (I1_center) at (0,0) {};
	\coordinate (I1_label) at (-3,0) {};
	
	\coordinate (I11) at (-2,0) {};
	\coordinate (I12) at (-1,0) {};
	\coordinate (I13) at (0,0) {};
	\coordinate (I14) at (1,0) {};
	\coordinate (I15) at (2,0) {};
    \draw[dotted, thick] (-0.35,0)--(0.35,0) ;
	
	\coordinate (I1B1) at (-1.5,-0.6) {};
	\coordinate (I1B2) at (-0.5,-0.6) {};
	\coordinate (I1B3) at (0.5,-0.6) {};
	\coordinate (I1B4) at (1.5,-0.6) {};
	
	\coordinate (I1BL) at (-2.5,0.8) {};
	\coordinate (I1BR) at (2.5,0.8) {};
	
	\coordinate (I1_br_lab) at (0,1.3) {};
\end{scope}

\begin{scope}[shift={(-3,-0.5)}]
	\draw[] (-0.5,-2.5) rectangle (0.5,2.5);
	\coordinate (I2_center) at (0,0) {};
	\coordinate (I2_label) at (0,3) {};

	\coordinate (I21) at (0,-2) {};
	\coordinate (I22) at (0,-1) {};
	\coordinate (I23) at (0,0) {};
    \draw[dotted, thick] (0,-0.35)--(0,0.35) ;

	\coordinate (I24) at (0,1) {};
	\coordinate (I25) at (0,2) {};

	\coordinate (I2R1) at (0.6,1.5) {};
	\coordinate (I2R2) at (0.6,0.5) {};
	\coordinate (I2R3) at (0.6,-0.5) {};
	\coordinate (I2R4) at (0.6,-1.5) {};
    
	\coordinate (I2LB) at (-0.8,-2.5) {};
	\coordinate (I2RB) at (-0.8,2.5) {};
	\coordinate (I2_br_lab) at (-2.1,0) {};

	\draw [thick, decorate, decoration = {brace}] (I2LB) --  (I2RB);
	
\end{scope}

\begin{scope}[shift={(12,5.5)}]
	\draw[] (-0.5,-2.5) rectangle (0.5,2.5);
	\coordinate (I3_center) at (0,0) {};
	\coordinate (I3_label) at (0,3) {};
	
	\coordinate (I31) at (0,-2) {};
	\coordinate (I32) at (0,-1) {};
	\coordinate (I33) at (0,0) {};
	\coordinate (I34) at (0,1) {};
	\coordinate (I35) at (0,2) {};
    \draw[dotted, thick] (0,-0.35)--(0,0.35) ;
	
	\coordinate (I3L1) at (-0.6,1.5) {};
	\coordinate (I3L2) at (-0.6,0) {};
	\coordinate (I3L3) at (-0.6,-1.5) {};

    \coordinate (I3LB) at (0.8,2.5) {};
    \coordinate (I3RB) at (0.8,-2.5) {};
	\coordinate (I3_br_lab) at (2.5,0) {};
		
\end{scope}

\begin{scope}[shift={(15,-1)}]
	\draw[] (-0.5,-2.5) rectangle (0.5,2.5);
	\coordinate (I4_center) at (0,0) {};
	\coordinate (I4_label) at (0,3) {};
	
	\coordinate (I41) at (0,-2) {};
	\coordinate (I42) at (0,-1) {};
	\coordinate (I43) at (0,0) {};
    \draw[dotted, thick] (0,-0.35)--(0,0.35) ;
	\coordinate (I44) at (0,1) {};
	\coordinate (I45) at (0,2) {};
	
	\coordinate (I4L1) at (-0.6,1.5) {};
	\coordinate (I4L2) at (-0.6,0.5) {};
	\coordinate (I4L3) at (-0.6,-0.5) {};
	\coordinate (I4L4) at (-0.6,-1.5) {};

    \coordinate (I4BL) at (0.8,2.5) {};
    \coordinate (I4BR) at (0.8,-2.5) {};
	\coordinate (I4_br_lab) at (2.1,0) {};
	
\end{scope}

\begin{scope}[shift={(4.5,-0.5)}]
	\draw[] (-0.5,-3.5) rectangle (0.5,3.5);
	\coordinate (E_center) at (0,0) {};
	\coordinate (E_label) at (0,-4) {};
	
	\coordinate (E1) at (0,-3) {};
	\coordinate (E2) at (0,-2) {};
	\coordinate (E3) at (0,-1) {};
    \draw[dotted, thick] (0,-1.35)--(0,-0.65) ;
	\coordinate (E4) at (0,0) {};
	\coordinate (EE_label) at (-0.8,0) {};
    \draw[dotted, thick] (0,1.35)--(0,0.65) ;
	\coordinate (E5) at (0,1) {};
	\coordinate (E6) at (0,2) {};
	\coordinate (E7) at (0,3) {};
	
	\coordinate (EL1) at (-0.6,-2.5) {};
	\coordinate (EL2) at (-0.6,-0.75) {};
	\coordinate (EL3) at (-0.6,0.75) {};
	\coordinate (EL4) at (-0.6,2.5) {};

\end{scope}

\begin{scope}[shift={(8,-0.5)}]
	\draw[] (-0.65,-3.5) rectangle (0.65,3.5);
	\coordinate (V_center) at (0,0) {};
	\coordinate (V_label) at (0,-4) {};
    
	\draw[fill=white] (-0.5,-0.75) rectangle (0.5,-2.75);
	\coordinate (U1) at (0,-2.25) {};
	\coordinate (U2) at (0,-1.25) {};
	\coordinate (U1_label) at (1.05,-2.25) {};
	\coordinate (U2_label) at (1.05,-1.25) {};
	\coordinate (UL1) at (-0.4,-2.25) {};
	\coordinate (UL2) at (-0.4,-1.25) {};
	\draw[dotted, thick] (0,-0.4)--(0,0.4) ;

	\draw[] (-0.5,0.75) rectangle (0.5,2.75);
	\coordinate (V1) at (0,2.25) {};
	\coordinate (V2) at (0,1.25) {};
	\coordinate (V1_label) at (1.05,2.25) {};
	\coordinate (V2_label) at (1.05,1.25) {};
	\coordinate (VL1) at (-0.4,2.25) {};
	\coordinate (VL2) at (-0.4,1.25) {};
	
	\coordinate (VR1) at (0.72,-3) {};
	\coordinate (VR2) at (0.72,0) {};
	\coordinate (VR3) at (0.72,3) {};

\end{scope}

\begin{scope}[shift={(4.5,6.5)}]
	\draw[] (-0.5,-1.5) rectangle (0.5,1.5);
	\coordinate (Iv_center) at (0,0) {};
	\coordinate (Iv_label) at (-0.8,2) {};
	\coordinate (IV_label) at (0,3.2) {};
	\draw[dotted, thick] (0,-2.0)--(0,-2.7) ;
    \draw[dotted, thick] (0,2)--(0,2.7) ;

	\coordinate (Iv1) at (0,-1) {};
	\coordinate (Iv2) at (0,0) {};
    \draw[dotted, thick] (0,-0.35)--(0,0.35) ;

	\coordinate (Iv3) at (0,1) {};
	
	\coordinate (IvR1) at (0.6,1) {};
	\coordinate (IvR2) at (0.6,0) {};
	\coordinate (IvR3) at (0.6,-1) {};

	\coordinate (IvL1) at (-0.6,1) {};
	\coordinate (IvL2) at (-0.6,0) {};
	\coordinate (IvL3) at (-0.6,-1) {};

	\coordinate (IvLB) at (-0.8,-1.5) {};
	\coordinate (IvRB) at (-0.8,1.5) {};
	\coordinate (Iv_br_lab) at (-1.5,0) {};
		
\end{scope}

\begin{scope}[shift={(8,6.5)}]
	\draw[] (-0.5,-1.5) rectangle (0.5,1.5);
	\coordinate (Ivv_center) at (0,0) {};
	\coordinate (Ivv_label) at (0.8,2) {};
	\coordinate (IVV_label) at (0,3.2) {};
	\draw[dotted, thick] (0,-2.0)--(0,-2.7) ;
    \draw[dotted, thick] (0,2)--(0,2.7) ;
	
	\coordinate (Ivv1) at (0,-1) {};
	\coordinate (Ivv2) at (0,0) {};
	\coordinate (Ivv3) at (0,1) {};
	
    \draw[dotted, thick] (0,-0.35)--(0,0.35) ;

	\coordinate (IvvL1) at (-0.6,1) {};
	\coordinate (IvvL2) at (-0.6,0) {};
	\coordinate (IvvL3) at (-0.6,-1) {};
	
	\coordinate (IvvBL) at (0.8,1.5) {};
	\coordinate (IvvBR) at (0.8,-1.5) {};

    \coordinate (Ivv_br_lab) at (1.3,0) {};
	
\end{scope}

    \draw[-,lightgray] (K1B1)--(K2U1);
    \draw[-,lightgray] (K1B1)--(K2U2);
    \draw[-,lightgray] (K1B1)--(K2U3);
    \draw[-,lightgray] (K1B2)--(K2U1);
    \draw[-,lightgray] (K1B2)--(K2U2);
    \draw[-,lightgray] (K1B2)--(K2U3);
    \draw[-,lightgray] (K1B3)--(K2U1);
    \draw[-,lightgray] (K1B3)--(K2U2);
    \draw[-,lightgray] (K1B3)--(K2U3);

    \draw[-,lightgray] (I2R1)--(K2L1);
	\draw[-,lightgray] (I2R1)--(K2L2);
    \draw[-,lightgray] (I2R2)--(K2L1);
	\draw[-,lightgray] (I2R2)--(K2L2);
	\draw[-,lightgray] (I2R2)--(K2L3);
    \draw[-,lightgray] (I2R3)--(K2L1);
	\draw[-,lightgray] (I2R3)--(K2L2);
	\draw[-,lightgray] (I2R3)--(K2L3);
	\draw[-,lightgray] (I2R4)--(K2L2);
	\draw[-,lightgray] (I2R4)--(K2L3);

\draw[-,lightgray] (I1B1)--(K1U2);
\draw[-,lightgray] (I1B1)--(K1U3);
\draw[-,lightgray] (I1B2)--(K1U1);
\draw[-,lightgray] (I1B2)--(K1U2);
\draw[-,lightgray] (I1B2)--(K1U3);
\draw[-,lightgray] (I1B3)--(K1U1);
\draw[-,lightgray] (I1B3)--(K1U2);
\draw[-,lightgray] (I1B3)--(K1U3);
\draw[-,lightgray] (I1B4)--(K1U1);
\draw[-,lightgray] (I1B4)--(K1U2);
\draw[-,lightgray] (I1B4)--(K1U3);

\draw[-,lightgray] (K2R1)--(EL1);
\draw[-,lightgray] (K2R1)--(EL2);
\draw[-,lightgray] (K2R1)--(EL3);
\draw[-,lightgray] (K2R1)--(EL4);
\draw[-,lightgray] (K2R2)--(EL1);
\draw[-,lightgray] (K2R2)--(EL2);
\draw[-,lightgray] (K2R2)--(EL3);
\draw[-,lightgray] (K2R2)--(EL4);
\draw[-,lightgray] (K2R3)--(EL1);
\draw[-,lightgray] (K2R3)--(EL2);
\draw[-,lightgray] (K2R3)--(EL3);
\draw[-,lightgray] (K2R3)--(EL4);

\draw[-,lightgray] (K2R1)--(IvL2);
\draw[-,lightgray] (K2R1)--(IvL3);
\draw[-,lightgray] (K2R2)--(IvL1);
\draw[-,lightgray] (K2R2)--(IvL3);
\draw[-,lightgray] (K2R3)--(IvL1);
\draw[-,lightgray] (K2R3)--(IvL2);
\draw[-,lightgray] (K2R3)--(IvL3);

\draw[-,lightgray] (IvR1)--(IvvL1);
\draw[-,lightgray] (IvR1)--(IvvL2);
\draw[-,lightgray] (IvR1)--(IvvL3);
\draw[-,lightgray] (IvR2)--(IvvL1);
\draw[-,lightgray] (IvR2)--(IvvL2);
\draw[-,lightgray] (IvR2)--(IvvL3);
\draw[-,lightgray] (IvR3)--(IvvL1);
\draw[-,lightgray] (IvR3)--(IvvL2);
\draw[-,lightgray] (IvR3)--(IvvL3);

\draw[-,lightgray] (I3L1)--(VR1);
\draw[-,lightgray] (I3L1)--(VR2);
\draw[-,lightgray] (I3L1)--(VR3);
\draw[-,lightgray] (I3L2)--(VR1);
\draw[-,lightgray] (I3L2)--(VR2);
\draw[-,lightgray] (I3L2)--(VR3);
\draw[-,lightgray] (I3L3)--(VR1);
\draw[-,lightgray] (I3L3)--(VR2);
\draw[-,lightgray] (I3L3)--(VR3);

\draw[-,lightgray] (IvR1)--(V1);
\draw[-,lightgray] (IvR1)--(V2);
\draw[-,lightgray] (IvR2)--(V1);
\draw[-,lightgray] (IvR2)--(V2);
\draw[-,lightgray] (IvR3)--(V1);
\draw[-,lightgray] (IvR3)--(V2);

\draw[-,lightgray] (VR1)--(K4L2);
\draw[-,lightgray] (VR1)--(K4L3);
\draw[-,lightgray] (VR1)--(K4L4);
\draw[-,lightgray] (VR2)--(K4L2);
\draw[-,lightgray] (VR2)--(K4L3);
\draw[-,lightgray] (VR2)--(K4L4);

\draw[-,lightgray] (VR3)--(K4L1);
\draw[-,lightgray] (VR3)--(K4L2);
\draw[-,lightgray] (VR3)--(K4L3);

\draw[-,lightgray] (K4R1)--(I4L1);
\draw[-,lightgray] (K4R1)--(I4L2);
\draw[-,lightgray] (K4R1)--(I4L3);
\draw[-,lightgray] (K4R1)--(I4L4);
\draw[-,lightgray] (K4R2)--(I4L1);
\draw[-,lightgray] (K4R2)--(I4L2);
\draw[-,lightgray] (K4R2)--(I4L3);
\draw[-,lightgray] (K4R2)--(I4L4);
\draw[-,lightgray] (K4R3)--(I4L1);
\draw[-,lightgray] (K4R3)--(I4L2);
\draw[-,lightgray] (K4R3)--(I4L3);
\draw[-,lightgray] (K4R3)--(I4L4);

\draw[fill=lightgray!30] (K1_center) ellipse (1.5 and 1.5);
\draw[fill=lightgray!30] (K2_center) ellipse (1.5 and 1.5);
\draw[fill=lightgray!30] (K4_center) ellipse (1.5 and 1.5);

\begin{scope}[shift={(8,-0.5)}]
	\draw[fill=white] (-0.5,-0.75) rectangle (0.5,-2.75);
	\draw[fill=white] (-0.5,0.75) rectangle (0.5,2.75);
	\coordinate (V1) at (0,2.25) {};
	\coordinate (V2) at (0,1.25) {};
	\coordinate (VL1) at (-0.4,2.25) {};
	\coordinate (VL2) at (-0.4,1.25) {};
	
	\coordinate (VR1) at (0.7,-3) {};
	\coordinate (VR2) at (0.7,0) {};
	\coordinate (VR3) at (0.7,3) {};

\end{scope}

\draw[very thick,-] (E4)--(V1);
\draw[very thick,-] (E4)--(V2);
\draw[very thick,-] (E4)--(U1);
\draw[very thick,-] (E4)--(U2);

\tikzstyle{every node}=[thick, circle, minimum size=10pt,inner sep=0pt,draw, fill=white]

\node [cross]  at (K1_center) {};
\node [cross]  at (K11) {};
\node [cross]  at (K12) {};
\node [cross]  at (K13) {};
\node [cross]  at (K14) {};
\node [cross]  at (K15) {};

\node [cross]  at (K2_center) {};
\node [cross]  at (K21) {};
\node [cross]  at (K22) {};
\node [cross]  at (K23) {};
\node [cross]  at (K24) {};
\node [cross]  at (K25) {};
		
\node [cross]  at (K4_center) {};
\node [cross]  at (K41) {};
\node [cross]  at (K42) {};
\node [cross]  at (K43) {};
\node [cross]  at (K44) {};
\node [cross]  at (K45) {};

\node[cross]  at (V1) {};
\node[cross]  at (V2) {};
\node[]  at (U1) {};
\node[cross]  at (U2) {};

\node  at (Iv1) {};
\node  at (Iv3) {};

\node[cross]  at (Ivv1) {};
\node[cross]  at (Ivv3) {};

\node at (I11) {};
\node  at (I12) {};
\node  at (I14) {};
\node  at (I15) {};

\node at (I21) {};
\node  at (I22) {};
\node  at (I24) {};
\node  at (I25) {};

\node at (I31) {};
\node at (I32) {};
\node at (I34) {};
\node at (I35) {};

\node at (I41) {};
\node at (I42) {};
\node at (I44) {};
\node at (I45) {};

\node at (E1) {};
\node  at (E2) {};
\node  at (E4) {};
\node  at (E6) {};
\node  at (E7) {};

\draw [thick, decorate, decoration = {brace}] (I3LB) --  (I3RB);
\draw [thick, decorate, decoration = {brace}] (IvLB)--  (IvRB);
\draw [thick, decorate, decoration = {brace}] (IvvBL) --  (IvvBR);
\draw [thick, decorate, decoration = {brace}] (I4BL) --  (I4BR);
\draw [thick, decorate, decoration = {brace}] (I1BL) --  (I1BR);

\tikzstyle{every node}=[inner sep=1pt]
\node at (K1_label) {$Q_1$};
\node at (K2_label) {$Q_2$};
\node at (K4_label) {$Q_4$};
\node at (I1_label) {$I_1$};
\node at (I2_label) {$I_2$};
\node at (I3_label) {$I_3$};
\node at (I4_label) {$I_4$};
\node at (V_label) {$V$};
\node at (E_label) {$E$};
\node at (Iv_label) {$I_v$};
\node at (Ivv_label) {$I'_v$};
\node at (IV_label) {$I_V$};
\node at (IVV_label) {$I'_V$};

\node at (K1_br_lab) {$n + s$};
\node at (K2_br_lab) {$n + s - (t+1)$};
\node at (K4_br_lab) {$n + s$};
\node at (I2_br_lab) {$\begin{array}{c}n + s \\ - {t \choose 2}\end{array}$};
\node at (I1_br_lab) {$n + s$};
\node at (I4_br_lab) {$n + s$};
\node at (Iv_br_lab) {${t \choose 2}$};
\node at (Ivv_br_lab) {$t$};
\node at (I3_br_lab) {$n+s+\ell$};

\node at (V1_label) {$v'$};
\node at (V2_label) {$v''$};

\node at (U1_label) {$u''$};
\node at (U2_label) {$u'$};

\node at (EE_label) {$e'$};
	\end{tikzpicture}
\end{center}
	\caption{\label{fig:reduction} Reduction: The edge $e=(u,v)$ of $G$ is represented by the edges between $e'$ and $v',v'',u',u''$ in $G'$. The defenders are represented by crossed circles. 
	}
\end{figure}
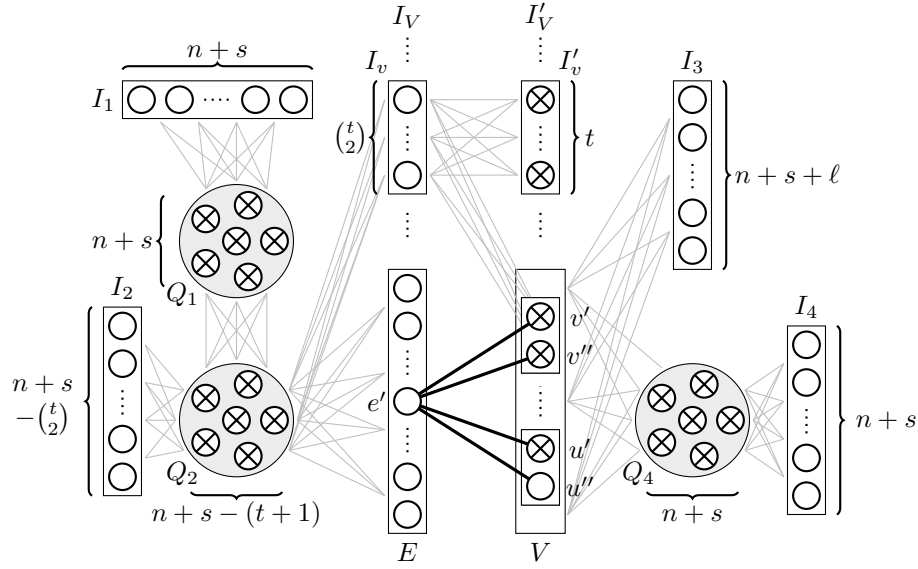

We claim that $s$ vertices can be removed from $G$ so that the resulting graph does not include $K_t$ as a subgraph if and only if there is an $\ell$-defense that counters every $k$-attack in $G'$.
We independently prove the implications in both directions.

($\Rightarrow$)
Let $X$ be a solution to the instance $G,s,t$.
We have $\norm{X} = s$ and $G\setminus X$ does not include $K_t$ as a subgraph.
We construct the following defense $D$:
$$D = Q_1 \cup Q_2 \cup Q_4 \cup I'_V  \cup \set{v': v \in V(G)} \cup \set{v'': v \in X}\text{.}$$
Note that $D$ positions exactly $n+s$ defenders on the vertices representing the vertices of $G$.
Additionally, $D$ has $nt$ defenders on $I'_V$ and $3(n+s)-(t+1)$ defenders on $Q_1 \cup Q_2 \cup Q_4$, which gives $\ell$ defenders in total.
We show that $D$ counters every attack of size at most $n+s$.
Suppose for a contradiction that $A$ is an inclusion minimal attack with $\setcount_D(N[A]) < \norm{A} \leq n+s$.
First, we observe that every vertex in the sets $I_1$, $Q_1$, $Q_2$, $Q_4$, $V$, $I_1$, $I_3$, and $I_4$ is adjacent to at least $n + s$ defenders (either from the set $Q_1$, $Q_2$, $Q_4$, or $V$).
If any of these vertices is included in $A$, then $\setcount_D(N[A]) \geq n+s \geq \norm{A}$.
We conclude that
$$
A \subseteq I_2 \cup E \cup I_V \cup I'_V\text{.}
$$
As every vertex in $I'_V$ is in~$D$, we do not have $A \subseteq I'_V$ and hence $A \cap (I_2 \cup E \cup I_V) \neq \emptyset$.
In particular, $Q_2 \subseteq N[A]$, and hence $\setcount_D(N[A]) \geq \norm{Q_2} = n+s - (t+1)$.
Since every vertex in $I_V$ is adjacent to additional $t+1$ defenders in $V \cup I'_V$, we have $A \cap I_V = \emptyset$.
Now, $N[I'_V] \cap A = I'_V \cap A$, and $I'_V \subseteq D$, so by the minimality of $A$ we get $A \cap I'_V = \emptyset$.
We conclude that $$A \subseteq I_2 \cup E\text{.}$$
We also have $A \cap E \neq \emptyset$, as $\norm{A} > n + s - (t+1) \geq n + s - {t \choose 2}=\norm{I_2}$, as $t+1\leq {t \choose 2}$ for $t\geq 4$.
As $A$ includes at least one vertex in $E$, $N[A]$ includes at least two vertices in $V\cap D$ and $\norm{A} > n + s - (t+1) + 2$.
We conclude that

$$\norm{A\cap E} \geq \norm{A} - \norm{I_2} > n + s - (t+1) + 2 - \brac{n + s - {t \choose 2}} = {t \choose 2} - (t-1) = {t-1 \choose 2}\text{.}$$

Now, as $A$ includes more than ${t-1 \choose 2}$ vertices in $E$, $N[A]$ includes at least $t$ vertices in $V\cap D$, and $\norm{A} > n + s - (t+1) + t$. 
Hence, $\norm{A}=n+s$ and $I_2 \subsetneq A$, as otherwise we would have $\setcount_D(N[A]) > n+s$.
We call an attack~$A$ to be \emph{serious} if

$$I_2 \subsetneq A \subseteq I_2 \cup E, \quad \norm{A \cap E} = {t \choose 2}.$$

We have shown that attack~$A$ is serious.
Any ${t \choose 2}$ edges in $G$ span at least $t+1$ vertices or span $t$ vertices that form a clique $K_t$ in $G$.
As we know that $G\setminus X$ does not contain $K_t$ as a subgraph, we get that if the edges span only $t$ vertices then at least one of them is in~$X$.
In either case, we get $\setcount_D(N[A]) \geq n + s - (t+1) + (t+1) = n+s$ and $D$ counters $A$.

\medskip

($\Leftarrow$)
Let $D$ be an $\ell$-defense that counters every $k$-attack in $G'$.
We make the following observations.
\begin{enumerate}
    \item As $D$ counters attack $I_1$, there are at least $n+s$ defenders in $I_1\cup Q_1$.
    \item As $D$ counters attack $I_4$, there are at least $n+s$ defenders in $I_4\cup Q_4$.
    \item\label{item:manydef} As $D$ counters every possible $(n+s)$-attack in $I_3\setminus D$, there are at least $n+s$ defenders in $V$.
    \item For every $v\in V(G)$, as $D$ counters attack $I'_v$, there are at least $t$ defenders in $I_v\cup I'_v$.
    Thus, in total, there are at least $nt$ defenders in $I_V\cup I'_V$.
    \item By calculation, there are at most $n+s-(t+1)$ defenders in $I_2\cup Q_2 \cup E$.
    \item\label{item:onedef} For every $v\in V(G)$, as $D$ counters attack $I_2\cup I_v$ and there are at most $n+s-(t+1)$ defenders in $I_2\cup Q_2$, there are at least $t+1$ defenders in $I_v\cup I'_v\cup \set{v',v''}$.
    As $\norm{I'_v}=t$, at least one of the defenders is in $I_v\cup\set{v',v''}$.
\end{enumerate}

We now construct a modified defense $D'$ and claim that $D'$ counters every serious attack on $G$.
As we focus on serious attacks, and the defenders in $I_V$ are not used to counter any serious attack, we move some of them.
For each $v \in V(G)$ if $v' \notin D$ and $v'' \notin D$, we move one defender from $I_v$ to $v'$ (guaranteed to be there by point~\ref{item:onedef}). 
Observe that even after this move there are at least $t$ defenders in $I_v\cup I'_v$.
As there are still at least $nt + (n+s) + (n+s)$ defenders in $I_V\cup I'_V\cup I_4\cup Q_4\cup I_1 \cup Q_1$, there are at most $(n+s) + (n+s-(t+1)$ in $I_2 \cup Q_2 \cup E \cup V$.
Second, since serious attacks include only vertices in $I_2\cup E$, and vertices in $Q_2$ dominate $I_2\cup E$, and there are at most $n+s-(t+1)$ defenders in $I_2\cup Q_2 \cup E$, we move all defenders from $I_2$ and $E$ to $Q_2$.
Third, while $\norm{D\cap V} > n+s$, we select any $v \in V(G)$ with defenders both in $v'$ and $v''$ and move the defender from $v''$ to $Q_2$. 
The resulting defense $D'$ has the property that it also counters every serious attack, as $D$ did.

The resulting defense $D'$ has exactly $n+s$ defenders in $V$, at most $n+s-(t+1)$ defenders in $Q_2$, at least one defender in each $\set{v',v''}$ for every $v \in V(G)$, and counters every serious attack.
Let $X$ be a set of vertices $v$ in $V(G)$ for which both $v'$ and $v''$ are in $D'$. There are exactly $s$ such vertices.
We claim that $G\setminus X$ does not contain $K_t$ as a subgraph.
Indeed, given a $t$-element set $Q$ of vertices of $G$ with $Q\cap X = \emptyset$ and $G[Q]$ isomorphic to $K_t$, we can create a $k$-attack composed of $I_2$ and the set of ${t \choose 2}$ vertices $e'$ representing edges of $G[Q]$.
This attack has size $n+s$ and has at most $n+s-(t+1)$ neighboring defenders in $Q_2$ and exactly $t$ neighboring defenders in $V$.
Thus, this serious attack is not countered by $D'$, which contradicts the construction of~$D'$.

We leave it to the reader to verify that exactly the same reduction also shows that \Pmultidom is also a \SPTWO-complete problem.
\end{proof}

\section{Interval Graphs}\label{sec:interval}
Building on the work of Ekim, Farley, Proskurowski, and Shalom~\cite{EkimFPS23}, who presented a greedy algorithm for \Pdefdom on proper interval graphs, we develop a similar greedy strategy for \Pmultidom on general interval graphs.

For the remainder of this section, let $G$ be an interval graph given by its interval representation: each vertex $v \in V(G)$ corresponds to a closed interval $I_v \subseteq \mathbb{R}$, and $\set{u,v} \in E(G)$ if and only if $I_u \cap I_v \neq \emptyset$.
We assume that this representation ensures that no two distinct intervals share an endpoint.
For any bounded closed set $S \subseteq \mathbb{R}$, let $\leftb(S)$ and $\rightb(S)$ denote the minimum and the maximum element in $S$, respectively. 
For any set (or multiset) of intervals $X$, let $\ispan(X) = \bigcup_{I \in X} I$ denote their union, and let $\tspan(X)$ be the minimum closed interval containing every interval in $X$, that is, the interval $\sbrac{\leftb(X),\rightb(X)}$.
For a set $Y \subseteq V(G)$, we define $\ispan(Y) = \ispan(\set{I_v : v \in Y})$ and $\tspan(Y) = \tspan(\set{I_v : v \in Y})$. 
A set (or multiset) of intervals $X$ is \emph{proper} if no interval in $X$ is a proper subset of another interval in $X$.
We say that a set $Y \subseteq V(G)$ is proper, when $\set{I_v : v \in Y}$ is proper.
The algorithm presented by Ekim, Farley, Proskurowski, and Shalom~\cite{EkimFPS23} for \Pdefdom worked under the condition that $V(G)$ is proper.

Our first observation is that in the multiset setting we can focus on constructing proper defenses.
\begin{observation}\label{obs:proper}
    For any multiset defense $D$, there exists a proper multiset defense $D'$ such that $\norm{D'} = \norm{D}$ and $D'$ counters any attack that $D$ counters. 
\end{observation}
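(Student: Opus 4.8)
The plan is to show that we can repeatedly replace a "dominated" interval (one that is a proper subset of another interval in the defense) by the interval containing it, without losing any defensive capability, and to argue this process terminates with a proper defense of the same size. So the first step is to pick any multiset defense $D$; if it is already proper we are done. Otherwise there exist vertices $u, v$ with both $u, v$ appearing in $D$ and $I_u \subsetneq I_v$. The key move is to take one defender stationed at $u$ and re-station it at $v$, producing a new multiset defense $D^\star$ with $\norm{D^\star} = \norm{D}$. I would then verify that $D^\star$ counters every $k$-attack that $D$ counters: by Observation~\ref{obs:hall} (or directly via the matching definition), it suffices to note that for every attack $A$, the set of defenders in $D$ adjacent-or-equal to any attacked vertex is a subset of the corresponding set for $D^\star$. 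Concretely, $N[u] \subseteq N[v]$ because $I_u \subseteq I_v$ implies every interval meeting $I_u$ also meets $I_v$, and $u \in N[v]$; hence a defender at $v$ can perform every matching role that a defender at $u$ could. Since the matching witnessing that $D$ counters $A$ can be post-composed with this substitution, $D^\star$ counters $A$ as well.

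The second step is to argue termination, which is the only place one has to be slightly careful. One natural potential function is $\sum_{w \in D} \norm{I_w}$ (summing interval lengths with multiplicity), but lengths are real numbers and need not strictly increase by a bounded amount — although each move strictly increases this sum, so no multiset of vertices can recur, and since there are finitely many multisets of a fixed size on $V(G)$ the process must halt. Alternatively, and more cleanly, I would use the combinatorial potential $\sum_{w \in D} \norm{\{ x \in V(G) : I_w \subseteq I_x \}}$, the total number of (vertex, containing-vertex) incidences summed over the defenders: moving a defender from $u$ to $v$ with $I_u \subsetneq I_v$ strictly decreases this quantity, since every interval containing $I_v$ also contains $I_u$ but $I_u$ itself is counted for $u$ and not for $v$. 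This is a non-negative integer that strictly decreases at each step, so the process terminates. When it terminates, no two defenders $u, v$ satisfy $I_u \subsetneq I_v$, i.e., the resulting defense $D'$ is proper; by induction on the number of steps it has the same size as $D$ and counters every attack that $D$ counters.

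The main obstacle — really the only subtlety — is making the termination argument rigorous given that interval lengths are arbitrary reals; I would handle this with the integer-valued potential described above rather than with lengths. Everything else is a one-line check that interval containment implies closed-neighborhood containment, which transfers matchings for free.
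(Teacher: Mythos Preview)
Your proposal is correct and follows essentially the same route as the paper: repeatedly replace a defender at $u$ by one at $v$ whenever $I_u\subsetneq I_v$, use $N[u]\subseteq N[v]$ to preserve all countered attacks, and argue termination via a strictly monotone potential. The paper replaces all copies of $u$ at once and simply uses total interval length as the potential (relying, as you also note, on the finiteness of multisets of fixed size over $V(G)$); your integer-valued containment potential is a slightly cleaner alternative but not a different idea.
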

\begin{proof}
    Consider a multiset defense $D$. If it is not proper, then there exist two vertices $u, v \in D$ such that $I_u \subsetneq I_v$.
    Let $D'$ be the multiset defense obtained from $D$ by replacing every copy of $u$ with an additional copy of $v$.
    Clearly, $\norm{D'} = \norm{D}$.
    Since $I_u \subsetneq I_v$, we have that $N[u] \subseteq N[v]$ and that any attack $A$ that is countered by $D$ using defenders in $u$ is countered by $D'$ using added defenders in $v$.
    Therefore, $D'$ counters every attack countered by $D$.
    Observe that this replacement increases the total length of the intervals that represent vertices in the defense.
    Thus, repeating this replacement procedure eventually stops and yields a proper defense that counters any attack that the original defense counters.
\end{proof}

Note that \cref{obs:proper} holds specifically for the multiset setting and does not have a direct analogue for \Pdefdom.
The next observation is that sets with smaller union are more dangerous for any defense than those with larger union.
\begin{observation}\label{obs:span}
    For any defense $D$ and two sets $A_1$, $A_2$ with $\setcount_D(N[A_1]) < \norm{A_1}$, $\norm{A_1} \le \norm{A_2}$, and $\ispan(A_2) \subseteq \ispan(A_1)$ we have $\setcount_D(N[A_2]) < \norm{A_2}$.
\end{observation}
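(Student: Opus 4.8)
The plan is to reduce the whole statement to a single containment of closed neighborhoods, using the interval representation. First I would record the key structural fact: for any set $A \subseteq V(G)$, a vertex $w$ lies in $N[A]$ if and only if $I_w$ intersects $\ispan(A)$. Indeed, $w \in N[A]$ means $w = v$ or $w$ is adjacent to some $v \in A$, which by the interval representation is exactly $I_w \cap I_v \neq \emptyset$ for some $v \in A$, i.e.\ $I_w \cap \ispan(A) \neq \emptyset$. So the closed neighborhood of a vertex set depends only on the union of the corresponding intervals.

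With this in hand, the hypothesis $\ispan(A_2) \subseteq \ispan(A_1)$ immediately gives $N[A_2] \subseteq N[A_1]$, since any interval meeting $\ispan(A_2)$ also meets $\ispan(A_1)$. As $\setcount_D$ is monotone under set inclusion (it merely counts how many defenders, with multiplicity, fall inside the given set), this yields $\setcount_D(N[A_2]) \leq \setcount_D(N[A_1])$. Chaining with the assumptions $\setcount_D(N[A_1]) < \norm{A_1}$ and $\norm{A_1} \leq \norm{A_2}$ gives $\setcount_D(N[A_2]) \leq \setcount_D(N[A_1]) < \norm{A_1} \leq \norm{A_2}$, which is exactly the claim.

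I do not expect a real obstacle here; the only point that needs a moment's care is the characterization of $N[A]$ through $\ispan(A)$, which relies on adjacency in an interval graph being precisely interval intersection. Everything after that is monotonicity of counting and transitivity of the stated inequalities, so the argument should be only a few lines. This observation is the technical engine that lets the greedy algorithm later treat attacks with small union as the worst case, so it is worth stating it in exactly this form.
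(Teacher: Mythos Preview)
Your proposal is correct and follows essentially the same approach as the paper: both arguments identify $\setcount_D(N[A])$ with the number of defender intervals meeting $\ispan(A)$, use the inclusion $\ispan(A_2)\subseteq\ispan(A_1)$ to get $\setcount_D(N[A_2])\le\setcount_D(N[A_1])$, and then chain the two given inequalities. You simply spell out the characterization of $N[A]$ via $\ispan(A)$ a bit more explicitly than the paper does.
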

\begin{proof}
    We have that $\setcount_D(N[A_2])$ is the number of intervals in $D$ that have a nonempty intersection with $\ispan(A_2)$ which is a subset of $\ispan(A_1)$.
    Thus, $$\setcount_D(N[A_2]) \le \setcount_D(N[A_1]) < \norm{A_1} \le \norm{A_2}\text{.}$$
\end{proof}

Let $x=\rightb(I_v)$ for some vertex $v \in V(G)$.
Let $V_x = \set{ v \in V(G) : \rightb(I_v) \le x}$ denote the nonempty set of vertices that lie completely to the left of $x$ in the representation.
Let $c = \norm{V_x}$ and for every integer $1 \le i \le c$ we define the \emph{$i$-th block} at $x$, denoted $B_{x,i}$, in the following way.
Let $v_1,v_2,\ldots,v_c$ be $V_x$ arranged in a sequence sorted descending by the left endpoint of the representing interval, that is, $\leftb(I_u) > \leftb(I_w)$ if and only if $u$ appears earlier than $w$ in the sequence.
We select $B_{x,i} = \set{v_1,v_2,\ldots,v_i}$ to be the first $i$ elements in the sequence.
Note that if defined, $B_{x,i}$ contains exactly $i$ vertices, $B_{x,i+1}$ is a superset of $B_{x,i}$, and $B_{x,i}$ maximizes $\leftb(\tspan(X))$ among all subsets $X \subseteq V_x$ with $\norm{X}=i$.

We say that $D$ is a \emph{$k$-block} defense if it counters every attack $B_{x,i}$ for every possible right endpoint $x$ and every $i \le k$.
\begin{lemma}\label{lem:block}
A proper $k$-block defense~$D$ counters every $k$-attack in $G$.
\end{lemma}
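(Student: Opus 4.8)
The plan is to show that an arbitrary $k$-attack $A$ is countered by a proper $k$-block defense $D$ by exhibiting, for each $A$, a block attack $B_{x,i}$ that is ``at least as dangerous'' in the sense of \cref{obs:span}. So suppose for contradiction that some $k$-attack $A$ is not countered, i.e. $\setcount_D(N[A]) < \norm{A} \le k$; I may assume $A$ is inclusion-minimal with this property. First I would restrict attention to the vertices of $A$ whose intervals are maximal in $A$ (with respect to inclusion): since $D$ is proper and $A$ is minimal, I can argue we may take $A$ itself to be proper, because replacing a vertex of $A$ by a superset-interval vertex only enlarges $N[A]$ and hence can only help the defense; combined with minimality this lets me assume no interval of $A$ properly contains another.

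Next I would pin down the right endpoint. Let $x = \rightb(A) = \rightb(I_v)$ where $v \in A$ is the vertex whose interval extends furthest right; this $x$ is a right endpoint of a vertex of $G$, so the blocks $B_{x,i}$ are defined, and moreover $A \subseteq V_x$ because every interval of $A$ lies to the left of (or ends at) $x$. Set $i = \norm{A} \le k$. The key claim is then $\ispan(B_{x,i}) \subseteq \ispan(A)$, or at least that $B_{x,i}$ is no larger and no more spread out; once that is in hand, \cref{obs:span} applied with $A_1 = A$ and $A_2 = B_{x,i}$ gives $\setcount_D(N[B_{x,i}]) < \norm{B_{x,i}}$, contradicting the assumption that $D$ counters every block attack $B_{x,i}$ with $i \le k$.

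The main obstacle is precisely justifying that comparison of spans. The blocks $B_{x,i}$ greedily take the $i$ intervals in $V_x$ with the rightmost left endpoints, so $B_{x,i}$ maximizes $\leftb(\tspan(X))$ over all size-$i$ subsets $X \subseteq V_x$; in particular $\leftb(\tspan(B_{x,i})) \ge \leftb(\tspan(A))$, and both have the same right extent $x$ (since $v \in A$ has $\rightb(I_v) = x$ and every block vertex ends at or before $x$, while $B_{x,i}$ contains the vertex ending exactly at $x$ as it has the rightmost left endpoint among... here I must be a bit careful). What actually needs checking is the \emph{union}, not just the bounding interval: I need $\ispan(B_{x,i}) \subseteq \ispan(A)$. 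This is where properness of $A$ enters crucially — for a proper family of intervals sorted by left endpoint, the union $\ispan$ is simply $\tspan$ minus the ``holes'' between consecutive intervals, and picking the $i$ intervals with rightmost left endpoints cannot create a union larger than that of any other $i$ intervals all ending by $x$. I would formalize this by a short exchange argument: order $A = \{a_1,\dots,a_i\}$ by left endpoint descending; since $a_1 \in V_x$, we have $\leftb(I_{a_1}) \le \leftb(I_{v_1})$ where $v_1$ is the first block vertex, and inductively the $j$-th block interval sits ``no further right'' than needed, so $\ispan(B_{x,i})$ extends at most as far left as $\ispan(A)$ and no further right, giving containment. I would then conclude the proof by invoking \cref{obs:span} as above.
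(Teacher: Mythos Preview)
Your approach has a genuine gap: the central claim $\ispan(B_{x,i}) \subseteq \ispan(A)$ is false in general, even when $A$ is proper. Take $G$ with three vertices represented by $[0,3]$, $[4,6]$, $[7,10]$, and let $A = \{[0,3],[7,10]\}$ (which is proper). Then $x = 10$, $i = 2$, and sorting $V_x$ by left endpoint descending gives $[7,10],[4,6],[0,3]$, so $B_{x,2} = \{[7,10],[4,6]\}$. Here $\ispan(A) = [0,3]\cup[7,10]$ while $\ispan(B_{x,2}) = [4,6]\cup[7,10]$, and $[4,6]\not\subseteq\ispan(A)$. Your exchange argument at the end only yields $\tspan(B_{x,i})\subseteq\tspan(A)$, which you yourself note is not enough for \cref{obs:span}. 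Relatedly, your argument never really uses that $D$ is proper --- only that $A$ can be taken proper --- and the lemma is false for non-proper $D$, so this is a red flag.

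The paper's proof takes a different route that uses properness of $D$ directly. Starting from a minimum-size uncountered $A$ and the same $x$, $m=\norm{A}$, it does not compare spans. Instead, from $\setcount_D(N[A]) < m \le \setcount_D(N[B_{x,m}])$ it extracts a defender $d\in D$ with $d\in N[B_{x,m}]\setminus N[A]$; since $I_d$ misses every interval of $A$, the set $A$ splits into a nonempty left part $A_1$ and a nonempty right part $A_2$ relative to $I_d$. By minimality of $A$ each part satisfies Hall's condition, forcing some $d'\in D\cap N[A_1]\cap N[A_2]$, and then $I_{d'}$ must straddle $I_d$, giving $I_d\subsetneq I_{d'}$ and contradicting properness of $D$. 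In short, the obstruction you hit --- a defender in $N[B_{x,m}]\setminus N[A]$ --- is precisely the object the paper exploits.
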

\begin{proof}
    Assume to the contrary that a proper $k$-block defense $D$ does not counter some $k$-attack.
    By \cref{obs:hall} we have a set $A$ with $\setcount_D(N[A]) < \norm{A} = m \le k$.
    Among such sets, we select $A$ with the minimum size $m$.

    Let $w \in A$ be the vertex that maximizes $\rightb(I_w)$ among the vertices in $A$.
    Let $x = \rightb(I_w)$, $c=\norm{V_x}$, and $v_1,v_2,\ldots,v_c$ be $V_x$ arranged in a sequence sorted descending by the left endpoint of the representing interval.
    All elements of $A$ are represented to the left of $x$, so we have $m \le c$.
    Recall that $B_{x,i} = \set{v_1,v_2,\ldots,v_i}$ for every $1 \le i \le c$.
    By the assumption of the lemma, we know that $A \neq B_{x,m}$.
    Since $D$ counters $B_{x,m}$, we have $\setcount_D(N[A]) < \setcount_D(N[B_{x,m}])$ and there is $d \in D$ with $d \in N[B_{x,m}]$ and $d \notin N[A]$. 
    Every interval representing a vertex in $A$ is either completely to the left or completely to the right of $I_d$.
    Let
    $$A_1 = \{v \in A: \rightb(I_{v}) < \leftb(I_d)\} \quad \text{and} \quad A_2 = \{v \in A: \rightb(I_d) < \leftb(I_{v})\}\text{.}$$
    Since $x \in I_{w}$ and $x \notin I_d$, we have $w \in A_2$.
    Since $\leftb(\tspan(A)) < \leftb(\tspan(B_{x,m}))$, we have $A_1 \neq \emptyset$.
    We have partitioned $A$ into two nonempty subsets $A_1$ and $A_2$.

    By the minimality of $A$ we have $\setcount_D(N[A_1]) \ge \norm{A_1}$ and $\setcount_D(N[A_2]) \ge \norm{A_2}$.
    We get that there is at least one $d' \in D \cap N[A_1] \cap N[A_2]$, as otherwise we would have $\setcount_D(N[A]) = \setcount_D(N[A_1]) + \setcount_D(N[A_2]) \ge \norm{A_1} + \norm{A_2} = \norm{A}$.
    Interval $I_{d'}$ intersects both $\tspan(A_1)$ and $\tspan(A_2)$.
    This allows us to write the following inequalities:
    $$
    \leftb(I_{d'}) < \rightb(\tspan(A_1)) < \leftb(I_d) < \rightb(I_d) < \leftb(\tspan(A_2)) < \rightb(I_{d'})\text{,}
    $$
    and conclude that $I_d$ is a proper subset of $I_{d'}$ contradicting with $D$ being a proper defense.
\end{proof}

We are ready for the proof of the main result of this section.
\thminterval*
\begin{proof}
The proposed algorithm, see \cref{alg:greedy} for the pseudocode, works as follows.
It assumes that $G$ has $n$ vertices and is given in the interval representation by the intervals $I_1,I_2,\ldots,I_n$.
It sorts the intervals ascending by their right endpoints.
Then, for every $1 \le i \le n$, let $x$ be the right endpoint of the $i$-th interval in the list.
The algorithm calculates the number $count$ of defenders missing to counter every $m$-block attack $B_{x,m}$ for $m \le k$.
Then, it selects the interval $d$ that maximizes the right endpoint among all intervals that include $x$.
The algorithm adds $count$ copies of the interval $d$ to $D$.

\begin{algorithm}
    \caption{Greedy Multiset Defensive Domination in Interval Graphs}
    \label{alg:greedy}
    \begin{algorithmic}[1] %
        \Procedure{GreedyDefense}{$\brac{I_1,I_2,\ldots,I_n},k$} %
            \State $D \gets \emptyset$
            \State $I_1,I_2,\ldots,I_n \gets$ intervals $I_1,I_2,\ldots,I_n$ sorted ascending by their right endpoints
            \For{$i=1,2,\ldots,n$}
                \State $x = \rightb(I_i)$
                \For{$m=1,2,\ldots,\min(i,k)$}
                    \State $count \gets \max(0,m-\setcount_D(N[B_{x,m}]))$
                    \State $d \gets$ interval with $\leftb(d) < x$ and maximum $\rightb(d)$
                    \State Add $count$ copies of $d$ to $D$
                \EndFor
            \EndFor
            \State \textbf{return} $D$
        \EndProcedure
    \end{algorithmic}
\end{algorithm}

Clearly, \cref{alg:greedy} runs in polynomial time.
The algorithm only adds inclusion maximal intervals to $D$, so the constructed multiset defense $D$ is proper. 
The algorithm explicitly adds the required number of defenders to counter every possible attack $B_{x,m}$, so $D$ is also $k$-block.
Thus, by \cref{lem:block} we get that $D$ counters every $k$-attack in $G$.

For the proof that the defense $D$ is of minimum size, let $\ell=\norm{D}$, $J_1,J_2,\ldots,J_{\ell}$ be the multiset of intervals in $D$ sorted ascending by their left endpoints.
Assuming to the contrary, let $D'$ be another proper defense that counters every $k$-attack with $\ell' =\norm{D'} < \norm{D}$ and $K_1,K_2,\ldots,K_{\ell'}$ be the multiset of intervals in $D'$ sorted ascending by their left endpoints.
Among such defenses, select one that maximizes the number $p$ such that $J_i = K_i$ for all $1\le i \le p$.
We have $p < \ell' < \ell$.
Let $d = J_{p+1}$ and $d' = K_{p+1}$.
We do not have $d' \subsetneq d$, as otherwise we could exchange $d'$ for $d$ in $D'$ and get another defense that counters every $k$-attack of the same size, but with larger $p$.
We also do not have $d \subsetneq d'$, as the algorithm only adds inclusion maximal intervals to $D$.
Let $x,m$ be such that the algorithm added $d$ to $D$ when considering attack $B_{x,m}$.

If $\leftb(d') > \leftb(d)$ then $\rightb(d') > \rightb(d)$ and as the algorithm did not add $d'$ to $D$ we have $\leftb(d') > x$.
This means that $D'$ has exactly $p$ intervals $K_1,\ldots,K_p = J_1,\ldots,J_p$ with the left endpoint less than or equal to $x$.
The algorithm calculated that these intervals do not counter $B_{x,m}$ and no other interval in $D'$ intersects $\tspan(B_{x,m})$.
We conclude that $D'$ does not counter $B_{x,m}$, a contradiction.

If $\leftb(d') < \leftb(d)$ then $\rightb(d') < \rightb(d)$ and we can exchange $d'$ for $d$ in $D'$ and get another defense $D''$ that counters every $k$-attack with $\norm{D''} = \norm{D'}$ but with bigger $p$.
Indeed, we can show that the resulting defense $D''$ satisfies $\setcount_{D''}(N[B_{y,q}]) \ge q$ for every possible right endpoint $y$ and every $q \le k$.
For $y < x$, this follows from the fact that $D''$ agrees with $D$ on the first $p$ intervals that are enough to counter these attacks.
For $y \ge x$, we have $d' \in N[B_{y,q}] \implies d \in N[B_{y,q}]$ and $\setcount_{D''}(N[B_{y,q}]) \ge \setcount_{D'}(N[B_{y,q}]) \ge q$.
We get a contradiction with the choice of $D'$.
\end{proof}

We claim that \cref{alg:greedy} can be implemented to run in $\Oh{nk}$ time using standard techniques, and we omit the details of this implementation.

\section{Summary}

In this work, we have shown that both \Pdefdom and \Pmultidom are \SPTWO-complete.
For the multiset variant of the problem, in the class of interval graphs, we have indicated a polynomial-time algorithm.
This algorithm does not work in the original setting where at most one defender can be located at a single vertex.
Furthermore, we do not even know if \Pcheckdom admits a polynomial-time algorithm in the class of interval graphs.
We would like to see the complexity status of \Pdefdom resolved in the class of interval graphs.

We also believe that potential applications in the facility location problem should justify the investigation of defensive domination problems in the class of planar graphs.
Note that the reductions presented for \SPTWO-completeness of \Pdefdom and \Pmultidom or the \W{1}-hardness of \Pcocheckdom construct graphs with large cliques and cannot be applied to show the hardness in the class of planar graphs.
There is a natural dynamic programming \FPT-algorithm that checks \Pcocheckdom when parametrized by the tree-width of the input graph.
When looking for dangerous $k$-attacks against a fixed defense, it is enough to consider attacks $A$ such that $N[A]$ is connected.
This means that in a planar graph, we can fix some outerplanar decomposition of the graph and only consider $k$-attacks that span at most $2k-1$ adjacent layers of the outerplanar decomposition.
As the tree-width of such subgraphs is bounded, we obtain a simple \FPT-algorithm that checks \Pcocheckdom when parametrized by $k$ in the planar graphs.
This shows that in the parametrized sense, defensive domination problems might be easier in planar graphs than they are in general graphs.
This motivates the following questions.
Does \Pcheckdom admit a polynomial-time algorithm in the class of planar graphs?
What is the complexity of \Pdefdom and \Pmultidom in the class of planar graphs?
Both problems are \NP-hard, but we do not know if they are \SPTWO-complete.

\bibliography{defensive_domination}

\appendix

\clearpage
\section{Clique Node Deletion}\label{sec:clique}

The main result of this paper is proved by a reduction from \Pcliquedel to \Pdefdom.
Clique Node Deletion Problem was first introduced by Rutenburg~\cite{Rutenburg86} in a more general setting called Generalized Node Deletion.
Rutenburg gives an idea for a proof~\cite[Theorem~6]{Rutenburg86} that \Pcliquedel is \SPTWO-complete.
As \Pcliquedel is an important intermediate problem for our result, we present a streamlined proof based on Rutenburg's idea.

The proof is based on a reduction from the following problem, which is a variation on the quantified boolean formula satisfaction problem.
It is a natural \SPTWO-complete problem with an easy reduction from \PSPTWOSAT~\cite{Stockmeyer76}.
\newprobspec{\PSPTWO}{$3$-CNF formula $\varphi(x_1,\ldots,x_a,y_1,\ldots,y_b)$ with variables in two disjoint sets $\set{x_1,\ldots,x_a}$ and $\set{y_1,\ldots,y_b}$}{$\mathsf{Yes}$ if and only if the following Boolean formula is true.
    $$
        \exists_{x_1,x_2,\ldots,x_a}:\quad \neg \exists_{y_1,y_2,\ldots,y_b}:\quad \varphi(x_1,\ldots,x_a,y_1,\ldots,y_b)
    $$}

\input ./figures/rutenburg2.tex

We are now ready to present the proof of the following theorem.
\thmrutenburg*
\begin{proof}
    We reduce from \PSPTWO to \Pcliquedel.
    Assume that we are given two disjoint sets of variables $X=\set{x_1,\ldots,x_a}$, $Y=\set{y_1,\ldots,y_b}$, and a set of clauses $\mathcal{C}=\set{C_1,\ldots,C_c}$ with each clause having exactly three occurrences of three distinct variables in $X \cup Y$.
    For technical reasons, we assume that $c>6$, as otherwise there are at most $18$ variables and we can simply check all the possibilities.
    We set $s=ac+3c$, $t=b+c$, and construct a graph $G$ such that $G$ admits a subset~$X$ of vertices with $\norm{X}\le s$ such that $G\setminus X$ does not contain $K_t$ as a subgraph if and only if the following formula
    $$
        \exists_{x_1,x_2,\ldots,x_a}:\quad \neg \exists_{y_1,y_2,\ldots,y_b}:\quad C_1 \wedge C_2 \wedge \ldots\wedge C_c
    $$
    is true.
    The graph~$G$ is constructed in the following steps.
    Consult Figure~\ref{fig:rutenburg} for an example.
    \begin{enumerate}
        \item For each variable $x_i$ ($1 \leq i\leq a)$ we introduce $G_{x_i}$: a copy of $K_{c,c}$ with one bipartition class called \emph{positive} and the other class called \emph{negative}.
        We number positive vertices from $1$ to $c$ and number negative vertices from $1$ to $c$.
        \item For each variable $y_j$ ($1 \leq j \leq b$) we introduce $G_{y_j}$: two independent vertices, one called \emph{positive} and the other called \emph{negative}.
        \item For each clause $C_k$ ($1 \leq k \leq c$) we introduce $G_{C_k}$: a copy of $K_{3,3}$ with one bipartition class called \emph{good} and the other class called \emph{bad}.
        We select one of the bad vertices to be \emph{ugly}.
        In~$C_k$ there are $3$ occurrences of variables.
        For each such occurrence, we assign a distinct good vertex in $G_{C_k}$.
        We call the good vertices assigned to variables in $X$ ($Y$) to be \emph{$X$-good} (\emph{$Y$-good}).
        \item \label{item:choice} For each edge $(u,v)$ in every $G_{x_i}$ and every $G_{C_k}$ we add a copy of $K_t$ on $u$, $v$ and additional $t-2$ new vertices.
        \item \label{item:good} For each $G_{C_k}$ with $g$ $X$-good vertices we compose set $Z$ of all $X$-good vertices in $G_{C_k}$ and every positive (negative) vertex numbered $k$ in $G_{x_i}$ with $x_i$ having a positive (negative) occurrence in $C_k$.
        Set $Z$ has exactly $2g$ elements.
        We add a copy of $K_{t-1+g}$ on vertices in $Z$ and additional $t-1-g$ new vertices.
        We call this added clique $Q_{C_k}$.
        \item \label{item:goody} For each positive (negative) occurrence of $y_j$ in $C_k$, let $u$ be the $Y$-good vertex in $G_{C_k}$ assigned to this occurrence.
        We add an edge between $u$ and the positive (negative) vertex in $G_{y_j}$.
        We also add edges between $u$ and both vertices in every other $G_{y_{j'}}$ for $j' \neq j$.
        \item \label{item:uglyy} For each clause $C_k$, let $u$ be the ugly vertex in $G_{C_k}$.
        We add edges between $u$ and both vertices in every $G_{y_{j}}$.
        \item \label{item:yy} For every $1\leq j < j' \leq b$, we add all four edges between any of the two vertices in $G_{y_{j}}$ and any of the two vertices in $G_{y_{j'}}$.
        \item \label{item:cross} Finally, for every $1\leq k < k' \leq c$, every bad or $Y$-good vertex $u$ in $G_{C_k}$, and every bad or $Y$-good vertex $u'$ in $G_{C_{k'}}$ we add an edge between $u$ and $u'$.
        
    \end{enumerate}
    Observe that $G$ contains many copies of $K_t$. 
    We distinguish four types of such cliques:
    \begin{enumerate}[label=(\Alph*)]
        \item \label{item:clique_x} A clique that contains a positive and a negative vertex in a single $G_{x_i}$.
        The common neighborhood of such a pair of vertices is the set of additional vertices added in step~\ref{item:choice} of the construction.
        \item \label{item:clique_c} A clique that contains a good and a bad vertex in a single $G_{C_k}$.
        The common neighborhood of such a pair of vertices is the set of additional vertices added in step~\ref{item:choice} of the construction.
        \item \label{item:clique_g} A clique that contains an $X$-good vertex in some $G_{C_k}$ and is not of type~\ref{item:clique_c}.
        As this clique does not contain any bad vertex in $G_{C_k}$, it must be a subclique of $Q_{C_k}$ constructed in step~\ref{item:good}.
        \item \label{item:clique_u} Other. Observe that every other clique of size $t$ can contain only $Y$-good or ugly vertices in $G_{C_j}$ or positive or negative vertices in $G_{y_i}$.
    \end{enumerate}
    We independently prove the implications in both directions.
    
    ($\Leftarrow$)
    Let $\nu_1,\nu_2,\ldots,\nu_a$ be a valuation of $x_1,x_2,\ldots,x_a$ such that formula $$\exists_{y_1,y_2,\ldots,y_b}:\quad C_1 \wedge C_2 \wedge \ldots \wedge C_c$$ is false.
    We construct a set $X$ of vertices in $V(G)$ in the following way.
    From each $G_{x_i}$ we select all the $c$ negative vertices if $\nu_i$ is true and all $c$ positive vertices if $\nu_i$ is false.
    For each $C_k$ if $C_k$ is satisfied by any of the variables $x_i=\nu_i$ we select all $3$ good vertices.
    Otherwise, we select all $3$ bad vertices.
    The constructed set $X$ has $ac+3c$ vertices.
    We claim that $G\setminus X$ does not contain $K_t$ as a subgraph.
    Each clique of type~\ref{item:clique_x} or~\ref{item:clique_c} is removed by the selection of vertices.
    Each clique $Q_{C_k}$ of type~\ref{item:clique_g} corresponding to a clause $C_k$ with $g$ occurrences of variables in $X$ is removed, as we either remove $g$ vertices corresponding to the false occurrences of variables in variable gadgets, or all good vertices in clause gadgets $G_{C_k}$.
    
    Now, assuming to the contrary, let $Q$ be some clique of type~\ref{item:clique_u} that remains in $G\setminus X$.
    It is clear that $Q$ can have at most one vertex in each $G_{y_j}$ and at most one vertex in each $G_{C_k}$.
    As $t=b+c$, $Q$ has exactly one vertex in each $G_{y_j}$ that corresponds to a valuation $\mu_1,\mu_2,\ldots,\mu_b$ of variables $y_1,y_2,\ldots,y_b$.
    We claim that the combined valuation ($x_i=\nu_i$, $y_j=\mu_j$) satisfies all clauses $C_1,C_2,\ldots,C_c$, which gives a contradiction.
    As clique $Q$ also has exactly one vertex in each $G_{C_k}$, let $u$ be that vertex and observe that $y$ is ugly or $Y$-good in $G_{C_k}$.
    If $u$ is the ugly vertex in $G_{C_k}$, it means that the bad vertices are not removed, and $C_k$ is satisfied by one of the variables $x_1,x_2,\ldots,x_a$.
    If $u$ is a $Y$-good vertex, then $u$ corresponds to an occurrence of some variable $y_j$.
    The construction of $G$ guarantees that there is only one edge between $u$ and a vertex in $G_{y_j}$ that corresponds to a valuation of $y_j$ that satisfies clause $C_k$.
    Thus, every clause is satisfied by some variable.
    A contradiction.

    ($\Rightarrow$)
    Let $X$ be a subset of vertices of $G$ witch $\norm{X} \leq ac+3c$ and $G\setminus X$ does not contain $K_t$ as a subgraph.
    We construct a valuation $\nu_1,\nu_2,\ldots,\nu_a$ of variables $x_1,x_2,\ldots,x_a$ such that formula $$\exists_{y_1,y_2,\ldots,y_b}:\quad C_1 \wedge C_2 \wedge \ldots \wedge C_c$$ is false.
    To remove all cliques of type~\ref{item:clique_x}, $X$ must include at least $a$ positive or $a$ negative vertices in each $G_{x_i}$.
    To remove all cliques of type~\ref{item:clique_c}, $X$ must include at least $3$ good or $3$ bad vertices in each $G_{C_k}$.
    As $\norm{X} \leq ac+3c$, we get that $X$ includes exactly $a$ positive or exactly $a$ negative vertices in each $G_{x_i}$.
    We set $\nu_i$ to be true if and only if $X$ includes negative vertices in $G_{x_i}$.
    Observe that if the constructed valuation of $x_1,x_2,\ldots,x_a$ satisfies clause $C_k$, then $X$ has to include $3$ good vertices in $G_{C_k}$, and the ugly vertex in $G_{C_k}$ remains in $G\setminus X$.
    Otherwise, there would remain a clique of type~\ref{item:clique_g} in $G\setminus X$.

    Now, assuming to the contrary, let $\mu_1,\mu_2,\ldots,\mu_b$ be a valuation of variables $y_1,y_2,\ldots,y_b$ that satisfies all the clauses.
    We construct a clique $Q$ in $G\setminus X$ of size $t=b+c$ the following way.
    For each $1\leq j \leq b$, we select a positive (negative) vertex from $G_{y_j}$ when $\mu_j$ is true (false).
    For each $1 \leq k \leq c$, we select the ugly vertex from $G_{C_k}$ if it is not removed.
    If the ugly vertex is removed, then we know that $C_k$ is satisfied by one of the variables $y_1,y_2,\ldots,y_b$, and we select the $Y$-good vertex in $G_{C_k}$ that was assigned to the satisfied occurrence.

    We claim that the resulting set of vertices induces a clique with $b+c$ vertices.
    Indeed, vertices from different subgraphs $G_{y_j}$, $G_{y_{j'}}$ are always connected by an edge (step~\ref{item:yy}).
    All ugly vertices are connected to each other (step~\ref{item:cross}) and both vertices in every subgraph $G_{y_j}$ (step~\ref{item:uglyy}).
    Let $u$ be a $Y$-good vertex selected from some $G_{C_k}$ to $Q$.
    Vertex $u$ corresponds to the occurrence of variable $y_j$ in $C_k$.
    Vertex $u$ is connected to ugly and $Y$-good vertices in other $G_{C_{k'}}$ for $k\neq k'$ (step~\ref{item:cross}) and both vertices in every subgraph $G_{y_{j'}}$ for $j\neq j'$ (step~\ref{item:goody}).
    As $u$ corresponds to a satisfied occurrence of $y_j$ in $C_k$, we get that $u$ is also connected to the vertex selected from $G_{y_j}$.
    Thus, we have found a clique $K_t$ in $G\setminus X$.
    A contradiction.
\end{proof}
\end{document}